\documentclass{llncs}

\usepackage{amsmath,amssymb}
\usepackage{xcolor}
\usepackage{graphicx}
\usepackage{booktabs}
\usepackage{caption}
\usepackage{enumerate}
\usepackage{enumitem}
%\usepackage{lineno}
%\usepackage{refcheck}

% Halmos box at the end of proofs.
\let\doendproof\endproof
\renewcommand\endproof{~\hfill$\qed$\doendproof}

\newcommand{\REMOVE}[1]{}%\textcolor[rgb]{.8,0,0}{Remove for short version: #1}}

\newcommand{\eps}{\varepsilon}
\newcommand{\opt}{{\rm OPT}}

\newenvironment{repeatlemma} [1]
{\noindent {\bf Lemma~\ref{#1}.}\ \slshape} {\normalfont}

\newenvironment{repeatcorollary} [1]
{\noindent {\bf Corollary~\ref{#1}.}\ \slshape} {\normalfont}

\begin{document}

\pagestyle{plain}
%\linenumbers

%\title{3-Colored Spanning Graphs}
\title{Multi-Colored Spanning Graphs}

\author{Hugo A. Akitaya\inst{1}
	\and Maarten L\"offler\inst{2}
	\and Csaba D. T\'oth\inst{1,3}
    }

\institute{Tufts University, Medford, MA, USA\\
	\email{hugo.alves\_akitaya@tufts.edu}
	\and
	Utrecht University, Utrecht, The Netherlands\\
	\email{m.loffler@uu.nl}
    \and
    California State University Northridge, Los Angeles, CA, USA\\
    \email{csaba.toth@csun.edu}
    }

	\maketitle
	
	\begin{abstract}
	We study a problem proposed by Hurtado et al.~\cite{HKK16} motivated by sparse set visualization.
	Given $n$ points in the plane, each labeled with one or more primary colors, a \emph{colored spanning graph} (CSG)
    is a graph such that for each primary color, the vertices of that color induce a connected subgraph.
	The \textsc{Min-CSG} problem asks for the minimum sum of edge lengths in a colored spanning graph.
    We show that the problem is NP-hard for $k$ primary colors when $k\ge 3$ and provide a $(2-\frac{1}{3+2\varrho})$-approximation algorithm for $k=3$ that runs in polynomial time, where $\varrho$ is the Steiner ratio.
    Further, we give a $O(n)$ time algorithm in the special case that the input points are collinear and $k$ is constant.
	\end{abstract}
	
	\section{Introduction}
	\label{sec:intro}
	Visualizing set systems is a basic problem in data visualization.
	Among the oldest and most popular set visualization tools are the Venn and Euler diagrams.
	However, other methods are preferred when the data involves a large number of sets with complex intersection relations~\cite{AM14}.
	In particular, a variety of tools have been proposed for set systems where the elements are associated with location data.
	Many of these methods use geometric graphs to represent set membership, motivated by reducing the amount of ink used in the representation, including LineSets~\cite{AR11}, Kelp Diagrams~\cite{DK12} and KelpFusion~\cite{MH13}.

	\begin{figure}[t]
		\centering
		%\graphicspath{{../images/}}
		\includegraphics{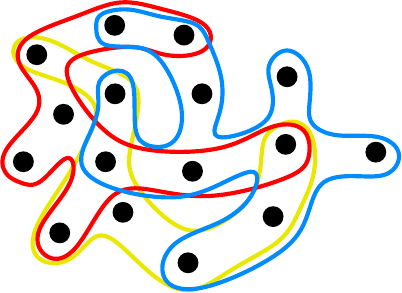}
		\hspace{2cm}
		\includegraphics{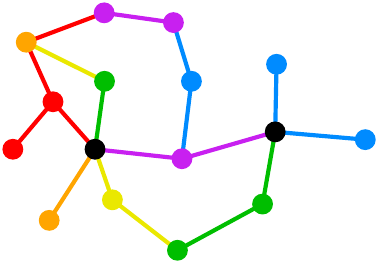}
		\caption{Left: A set of points and three subsets, $S_1$, $S_2$, and $S_3$, drawn as outlines in different colours.
        Right: The corresponding (minimum) coloured spanning graph. Refer to Section~\ref {sec:prelim} for an explanation of colour use.}
		\label{fig:example}
	\end{figure}	
	
Hurtado et al.~\cite{HKK16} recently proposed a method for drawing sets using outlines that minimise the total visual clutter. The underlying combinatorial problem is to compute a \emph{minimum colored spanning graph};
see Figure~\ref{fig:example}. They studied the problem for $n$ points in a plane and two sets (each point is a member of one or both sets). The output is a graph with the minimum sum of edge lengths such that the subgraph induced by each set is connected. They gave an algorithm that runs in $O(n^6)$-time,\footnote{An earlier claim that the problem was NP-hard~\cite{HKK13} turned out to be incorrect~\cite{HKK16}.} and a $(\frac{1}{2}\varrho+1)$-approximation in $O(n\log n)$ time, where $\varrho$ is the Steiner ratio (the ratio between the length of a minimum spanning tree and the length of a minimum Steiner tree). Efficient algorithms are known in two special cases: One runs in $O(n)$ time for collinear points that are already sorted~\cite{HKK16}; the other runs in $O(m^2+n)$ time for cocircular points, where $m$ is the number of points that are elements of both sets~\cite{BBD16}. This problem also has applications for connecting different networks with minimum cost, provided that edges whose endpoints belong to both networks can be shared.
	
\smallskip\noindent{\bf Results and organization.}
We study the minimum colored spanning graph problem for $n$ points in a plane and $k$ sets, $k\ge 3$.
%, and each point is a member of at least one set.
The formal definition and some properties of the optimal solution are in Section~\ref{sec:prelim}.
In Section~\ref{sec:hard}, we show that \textsc{Min-$k$CSG} is NP-complete for all $k\geq 3$,
and in Section~\ref{sec:approximation} we provide an $(2-\frac{2}{2+2\varrho})$-approximation algorithm
for $k=3$ that runs in $O(n\log n+m^6)$ time, where $m$ is the number of multichromatic points. This improves the previous $(2+\frac{\varrho}{2})$-approximation from~\cite{HKK16}.
Section~\ref{sec:collinear} describes an algorithm for the special case of collinear points that runs in $2^{O(k^2 2^k)}\cdot n$ time.
Due to space constraints, some proofs are omitted; they can be found in Appendix~\ref {app:proofs}.

\section{Preliminaries}
\label{sec:prelim}

In this section, we define the problem and show a property of the optimal solution related to the minimum spanning trees, which is used in Sections~\ref{sec:hard} and \ref{sec:approximation}.

\smallskip\noindent{\bf Definitions.} Given a set of $n$ points in the plane $S=\{p_1,\ldots,p_n\}$ and subsets $S_1,\ldots , S_k\subseteq S$, we represent set membership with a function $\alpha:S\rightarrow 2^{\{1,\ldots,k\}}$, where $p\in S_c$ iff $c\in \alpha(p)$ for every \emph{primary color} $c\in \{1,\ldots , k\}$. We call $\alpha(p)$ the \emph{color} of point $p$. A point $p$ is \emph{monochromatic} if it is a member of a single set $S_i$, that is, $|\alpha(p)|=1$, and \emph{multi-chromatic} if $|\alpha(p)|>1$. For an edge $\{p_i,p_j\}\in E$ in a graph $G=(S,E)$, we use the shorthand notation $\alpha(\{p_i,p_j\})=\alpha(p_i)\cap \alpha(p_j)$ for the shared primary colors of the two vertices. For every $c\in \{1,\ldots ,k\}$, we let $G_c=(S_c,E_c)$ denote the subgraph of $G=(S,E)$ induced by $S_c$.
%
%Given a set of colors $\Gamma\subseteq 2^{\{1,\ldots,k\}}\setminus\{\emptyset\}$, we denote by $\Gamma_c=\{\gamma\in \Gamma: c\in \gamma\}$ the set of colors in $\Gamma$ that contain a primary color $c\in \{1,\ldots ,k\}$.
%
All figures in this paper depict only three primary colors: \texttt{r}, \texttt{b}, and \texttt{y} for red, blue, and yellow respectively.
Multi-chromatic points and edges are shown green, orange, purple, or black if their color is $\{\texttt{b}, \texttt{y}\}$,
$\{\texttt{r}, \texttt{y}\}$ or $\{\texttt{r}, \texttt{b}\}$, or $\{\texttt{r}, \texttt{b}, \texttt{y}\}$, respectively. See, for example, Fig.~\ref {fig:example} (b).

A \emph{colored spanning graph} for the pair $(S,\alpha)$, denoted CSG$(S,\alpha)$, is a graph $G=(S,E)$ such that $(S_c,E_c)$ is connected for every primary color $c\in \{1,\ldots ,k\}$.	The \emph{minimum colored spanning graph} problem (\textsc{Min-CSG}), for a given pair $(S,\alpha)$, asks for the minimum cost $\sum_{e\in E} w(e)$ of a CSG$(S,\alpha)$, where $w(e)$ is the Euclidean length of $e$.
When we wish to emphasize the number $k$ of primary colors, we talk about the \textsc{Min-$k$CSG} problem.
%Without loss of generality, we may assume $S=S_1\cup \ldots \cup S_k$.

\smallskip\noindent{\bf Monochromatic edges in a minimum CSG.}
The following lemma shows that we can efficiently compute some of the monochromatic edges of a minimum CSG for an instance $(S,\alpha)$ using the \emph{minimum spanning tree} (\emph{MST}) of $S_c$ for every primary color $c\in\{1,\ldots , k\}$.	

\begin{lemma}\label{lem:monocrhomatic}
	Let $(S,\alpha)$ be an instance of \textsc{Min-CSG} and $c\in \{1,\ldots , k\}$.
	Let $E(MST(S_c))$ be the edge set of an MST of $S_c$, and let $S_c'$ be the set of multi-chromatic points in $S_c$.
	Then there exists a minimum CSG that contains at least $|E($MST$(S_c))|-|S_c'|+1$ edges of $E(MST(S_c))$.
	The common edges of $E(MST(S_c))$ and of such a minimum CSG can be computed in $O(n\log n)$ time.
\end{lemma}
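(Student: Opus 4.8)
The plan is to prove the existence statement by an exchange argument and then read off the algorithm. First note that $|E(\text{MST}(S_c))|=|S_c|-1$, so the target bound $|E(\text{MST}(S_c))|-|S_c'|+1$ equals $|S_c|-|S_c'|$, the number of \emph{monochromatic} points of color $c$. The strategy is to start from an arbitrary minimum CSG $G$ and rewrite it, without ever increasing its cost or breaking the connectivity of any color, until every monochromatic edge of color $c$ (an edge with at least one endpoint whose color is $\{c\}$) is an edge of $T:=\text{MST}(S_c)$; a counting step will then show that there are at least $|S_c|-|S_c'|$ such edges.

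For the exchange, I would first record two facts about any minimum CSG $G$, both consequences of minimality (edges have positive length, so no edge can be deleted while keeping all colors connected): every monochromatic edge of color $c$ is a bridge of $G_c$, and the monochromatic edges of color $c$ form a forest. Now suppose $g=\{u,v\}$ is a monochromatic edge of color $c$ with $g\notin T$. Deleting the bridge $g$ splits $G_c$ into a cut $(X,Y)$ with $u\in X$, $v\in Y$. Since $g\notin T$, the cycle property gives that every edge of the $u$--$v$ path in $T$ has weight at most $w(g)$; this path crosses the cut at some edge $f$ with $w(f)\le w(g)$, and because $g$ is a bridge no other $G_c$-edge crosses $(X,Y)$, so $f\notin G$. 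Replacing $G$ by $G-g+f$ keeps $G_c$ connected (now through $f$), leaves every other color untouched ($g$ is monochromatic), and does not increase cost, so it is again a minimum CSG; moreover it contains one more edge of $T$ and one fewer non-$T$ monochromatic edge. Using the number of $T$-edges present as a strictly increasing, bounded potential, this process terminates in a minimum CSG $G^{*}$ all of whose monochromatic color-$c$ edges lie in $T$.

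The counting step is short: in $G^{*}$, contract all multi-chromatic points $S_c'$ to a single vertex $z$. Every multi-chromatic color-$c$ edge becomes a loop, while no monochromatic edge does (a monochromatic edge has an endpoint outside $S_c'$); since the contraction of the connected graph $G^{*}_c$ is connected and loops contribute nothing, the monochromatic edges alone must span the $|S_c|-|S_c'|+1$ contracted vertices. Hence $G^{*}$ has at least $|S_c|-|S_c'|=|E(\text{MST}(S_c))|-|S_c'|+1$ monochromatic edges, and by construction all of them belong to $T$, proving the bound.

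For the algorithm I would compute $T=\text{MST}(S_c)$ in $O(n\log n)$ time (e.g.\ from the Delaunay triangulation), and then select the guaranteed edges by running Kruskal on the $|S_c|-1$ edges of $T$ in increasing weight order, but with all points of $S_c'$ pre-merged into one component: an edge is kept iff it joins two current components. This keeps exactly $|S_c|-|S_c'|$ edges, discarding the $|S_c'|-1$ heaviest edges needed to separate the multi-chromatic points, in $O(n\log n)$ time overall; the kept edges form the cheapest sub-forest $R$ of $T$ in which every component contains a multi-chromatic point. The main obstacle is the correctness of this specific choice, i.e.\ showing that $R$ (and not merely \emph{some} $|S_c|-|S_c'|$ MST edges) is contained in a minimum CSG. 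The key observation making this work is that monochromatic color-$c$ edges interact with no other color, so in any minimum CSG the monochromatic part of $G_c$ may be freely replaced by any sub-forest of equal or smaller weight that still attaches every monochromatic point to the multi-chromatic ``backbone''; steering the exchange toward the minimum-weight such forest yields exactly $R$. Making this replacement argument precise—in particular verifying that it never disturbs the connectivity that $G^{*}$ provides to color $c$ through its multi-chromatic edges—is the delicate point I expect to require the most care.
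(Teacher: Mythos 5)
Your existence argument is correct, and it is a genuinely different route from the paper's: the paper first \emph{constructs} a concrete forest $E_c'\subseteq E(\text{MST}(S_c))$ (by deleting, on each MST path between two multi-chromatic points, a longest edge) and then swaps the edges of $E_c'$ one at a time \emph{into} an optimal solution; you instead swap non-MST monochromatic edges \emph{out of} an optimal solution, using the observation that every monochromatic edge of a minimum CSG is a bridge of $G_c$ together with the MST cycle property, and then count via contraction of $S_c'$. That half is fine. The problem is that the lemma also asserts that the guaranteed MST edges can be \emph{computed} in $O(n\log n)$ time, i.e.\ one must exhibit a specific set of at least $|E(\text{MST}(S_c))|-|S_c'|+1$ MST edges all lying in one minimum CSG. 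Your algorithm (Kruskal on $T$ with $S_c'$ pre-merged, output $R$) is essentially the paper's $E_c'$, but your justification that $R$ sits inside a minimum CSG rests on a claim that is false: that ``the monochromatic part of $G_c$ may be freely replaced by any sub-forest of equal or smaller weight that still attaches every monochromatic point to the multi-chromatic backbone.'' Monochromatic edges can also be the \emph{only} color-$c$ connection \emph{between} multi-chromatic points. Concretely, take $z_1,z_2$ of color $\{\texttt{r},\texttt{b}\}$ with $|z_1z_2|=2$, a monochromatic red point $u$ near their midpoint with $|uz_1|=1.01$, $|uz_2|=1.02$, and blue connectivity between $z_1,z_2$ supplied by a far-away chain of blue-only points. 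The unique minimum CSG uses both red edges $uz_1,uz_2$ (cost $2.03$, versus $3.01$ for $z_1z_2$ plus an attachment of $u$), while $R=\{uz_1\}$; replacing the monochromatic part $\{uz_1,uz_2\}$ by $R$ attaches $u$ to the backbone and has smaller weight, yet it disconnects $z_2$ from $z_1$ in red. So the wholesale replacement you propose can destroy connectivity of color $c$, exactly through the multi-to-multi connectivity you flag as ``delicate'' but do not resolve.

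What is missing is a \emph{per-edge} exchange targeted at the precomputed set, which is precisely how the paper's proof of Lemma~\ref{lem:monocrhomatic} works: while some $e\in E_c'$ is absent from the optimum, insert $e$ into the (connected) graph $G_c^*$, obtaining a cycle through $e$, and show this cycle carries a monochromatic edge $e^*\notin E_c'$ with $w(e)\le w(e^*)$; deleting $e^*$ from a cycle never breaks connectivity, and $e^*$ being monochromatic means no other color is affected. Your framework can be repaired along the same lines: after your first exchange phase all monochromatic edges of $G^*$ lie in $T$, so the cycle created by inserting $e\in R\setminus G^*$ contracts (under identification of $S_c'$) to a cycle through $e$ whose non-loop edges are monochromatic $T$-edges of $G^*$; since $R$ is a minimum spanning tree of that contracted multigraph, the cut defined by $R-e$ yields an edge $e^*\notin R$ on this cycle with $w(e^*)\ge w(e)$, and the swap goes through. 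But this step is the actual content of the constructive half of the lemma, and as written your proposal does not contain it.
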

\begin{proof}
    %We call two multi-chromatic points in $S_c'$ \emph{linked} if MST$(S_c)$ contains a path between them in which all interior points are monochromatic; note that every edge in such a path is monochromatic, since it is incident to least one monochromatic point.
Construct a monochromatic subset $E_c'\subset E($MST$(S_c))$ by successively removing a longest edge from the path in MST$(S_c)$ between any two points in $S_c'$. An MST$(S_c)$ can be computed in $O(n\log n)$ time, and $E_c'$ can be obtained in $O(n)$ time. The graph $(S_c,E_c')$ has $|S_c'|$ components, each containing one element of $S_c'$, hence $|E_c'|=|E($MST$(S_c))|-|S_c'|+1$.
	
Let $(S,E^{\opt})$ be a minimum CSG. While there is an edge $e\in E_c'\setminus E^{\opt}$, we can find an edge $e^*\in E^{\opt}\setminus E_c'$ such that exchanging $e^*$ for $e$ yields another minimum CSG. Indeed, since $(S_c,E^\opt_c)$ is connected, the insertion of the edge $e$ creates a cycle $C$ that contains $e$. Consider the longest (open or closed) path $P\subseteq C$ that is monochromatic and contains $e$. Note that at least one of the endpoints of $e$ is monochromatic, therefore $P$ contains at least two monochromatic edges. Since every component of $(S_c,E_c')$ is a tree and contains only one multi-chromatic point, there is a monochromatic edge $e^*\in E^\opt\setminus E_c'$ in $P$.
We have $w(e)\le w(e^*)$, because there is a cut of the complete graph on $S_c$	that contains both $e$ and $e^*$, and $e\in E($MST$(S_c))$. Since $\alpha(e^*)=c$, the deletion of $e^*$ can only influence the connectivity of the induced subgraph $(S_c,E^{\opt}_c)$. Consequently, $(S,E^{\opt}\cup \{e\}\setminus\{e^*\})$ is a CSG with equal or lower cost than $(S,E^{\opt})$.
By successively exchanging the edges in $E_c'\setminus E^{\opt}$, we obtain a minimal CSG containing $E_c'$.
\end{proof}

Hurtado et al.~\cite{HKK16} gave an $O(n^6)$-time algorithm for \textsc{Min-2CSG}, by a reduction to a matroid intersection problem on the set of all possible edges on $S$, which has $O(n^2)$ elements. Their algorithm for matroid intersection finds $O(n^2)$ single source shortest paths in a bipartite graph with $O(n^2)$ vertices and $O(n^4)$ edges, which leads to an overall running time of $O(n^6)$.
%Using Lemma~\ref{lem:monocrhomatic},
We improve the runtime to $O(n\log n+m^6)$, where $m$ is the number of multi-chromatic points.

\begin{corollary}\label{cor:m6}
An instance $(S,\alpha)$ of \textsc{Min-2CSG} can be solved in $O(n\log n+m^6)$ time, where $m$ is the number of multi-chromatic points in $S$.
\end{corollary}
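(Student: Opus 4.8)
The plan is to use Lemma~\ref{lem:monocrhomatic} as a preprocessing step that removes all but $O(m)$ ``active'' vertices, and then run the matroid-intersection algorithm of Hurtado et al.\ on the resulting instance of size $O(m)$ rather than $O(n)$. Concretely, I would first apply Lemma~\ref{lem:monocrhomatic} to both colors $c\in\{1,2\}$. For $k=2$ the multi-chromatic set is $S_1\cap S_2$, so $|S_1'|=|S_2'|=m$, and in $O(n\log n)$ time the lemma produces monochromatic forests $E_1'$ and $E_2'$ that lie in some minimum CSG; the forest $E_c'$ splits $S_c$ into exactly $m$ trees, each containing a single multi-chromatic point. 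I write $R_c=E(\mathrm{MST}(S_c))\setminus E_c'$ for the corresponding $m-1$ removed edges, and note that each edge of $R_c$ directly joins two of the trees.

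Next I would contract each color-$c$ tree to its multi-chromatic point, leaving a reduced instance on the $m$ multi-chromatic points $p_1,\dots,p_m$, all of which carry both colors. The remaining task is to add edges reconnecting the $m$ color-$1$ components and the $m$ color-$2$ components at minimum extra cost, where a multi-chromatic edge $p_ip_j$ of cost $|p_ip_j|$ serves both colors at once. I would take as ground set the $\binom{m}{2}$ multi-chromatic edges together with the contracted edges of $R_1$ (as color-$1$ edges) and of $R_2$ (as color-$2$ edges), for a total of $O(m^2)$ elements, and solve this reduced \textsc{Min-2CSG} by the matroid intersection of Hurtado et al. Since that algorithm is purely combinatorial (graphic matroids with prescribed weights), it applies verbatim to the reduced, non-geometric weights, and with $O(m^2)$ ground-set elements its analysis yields a bipartite exchange graph with $O(m^2)$ vertices and $O(m^4)$ edges, over which $O(m^2)$ shortest-path augmentations run in $O(m^6)$ total time.

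The crux, and the step I expect to be the main obstacle, is proving that restricting the color-$c$-only connecting edges to the contracted $R_c$ loses nothing. I would introduce the metric $d_c(i,j)$ equal to the length of the shortest color-$c$ edge between trees $i$ and $j$, and first show that the contracted set $R_c$ is a minimum spanning tree of the $m$ super-nodes under $d_c$: deleting a single contracted $R_c$-edge induces a cut of $S_c$ crossed by no other $\mathrm{MST}(S_c)$-edge, so by the cut property that edge is the cheapest color-$c$ connection across the cut and its length equals the relevant $d_c$ value. The second, decisive claim is that for any subset $M$ of multi-chromatic edges chosen by an optimal (hence minimal) solution, the cheapest color-$c$-only completion connecting the components of $M$ can be taken inside $R_c$: for every cut of the graph contracted along $M$, the cut property of $\mathrm{MST}(S_c)$ supplies a minimum crossing edge that already belongs to $R_c$, so a spanning tree of the contraction can be built entirely from $R_c$ by the greedy/Kruskal argument. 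Combined with Lemma~\ref{lem:monocrhomatic}, which places $E_1'\cup E_2'$ in an optimal CSG, this certifies that the ground set $R_1\cup R_2\cup\{\text{multi-chromatic edges}\}$ contains an optimal solution. Adding the $O(n\log n)$ preprocessing to the $O(m^6)$ matroid intersection then gives the claimed $O(n\log n+m^6)$ bound.
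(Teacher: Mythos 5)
Your proposal is correct and takes essentially the same route as the paper: apply Lemma~\ref{lem:monocrhomatic} to both colors in $O(n\log n)$ time, then rerun the matroid-intersection algorithm of Hurtado et al.\ on the residual connection problem over a ground set of $O(m^2)$ elements, giving $O(n\log n+m^6)$ overall. Your further restriction of the color-$c$ connectors to the $m-1$ leftover MST edges $R_c$ (justified via the MST-contraction/cut-property argument) is a correct refinement, but it is not needed for the stated bound, since the purple edges already make the ground set size $\Theta(m^2)$.
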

\begin{proof}
	By Lemma~\ref{lem:monocrhomatic}, we can compute two spanning forests on $S_1$ and $S_2$, respectively, each with $m$ components, that are subgraphs of a minimum CSG in $O(n\log n)$ time. It remains to find edges of minimum total length that connect these components in each color, for which we can use the same matroid intersection algorithm as in~\cite{HKK16}, but with a ground set of size $O(m^2)$.
\end{proof}

	\section{General Case}
	\label{sec:hard}
	
	We show that the decision version of \textsc{Min-CSG} is NP-complete.
	We define the decision version of \textsc{Min-CSG} as follows: given an instance $(S,\alpha)$ and $W>0$,
    is there a CSG  $(S,E)$ such that $\sum_{e\in E}w(e)<W$?
	
	\begin{lemma}
		\label{lem:np}
		\textsc{Min-$k$CSG} is in NP.
	\end{lemma}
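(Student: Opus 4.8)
The plan is to give a succinct certificate that can be verified in polynomial time. For a ``yes'' instance $(S,\alpha)$ with threshold $W$, the natural witness is simply the edge set $E$ of a candidate colored spanning graph $G=(S,E)$. Since $G$ is a simple graph on the $n$ points of $S$, we have $|E|\le\binom{n}{2}=O(n^2)$; encoding each edge by the indices of its two endpoints yields a certificate of size polynomial in the input, so it suffices to describe a polynomial-time verifier.

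The verifier would perform three checks. First, I would confirm that every element of $E$ is a pair of distinct points of $S$; this takes $O(n^2)$ time. Second, for each primary color $c\in\{1,\ldots,k\}$ I would form the induced subgraph $(S_c,E_c)$ and test its connectivity with a single breadth-first search. Each such test is linear in the size of $G$, and the number of colors is bounded by the input size, so the connectivity phase runs in polynomial time. Third, I would evaluate the total length $\sum_{e\in E}w(e)$ and compare it with $W$, accepting exactly when all $k$ induced subgraphs are connected and the total length is strictly below $W$. Correctness is then immediate: an accepted certificate is by definition a CSG of cost less than $W$, and conversely the edge set of any such CSG passes all three checks.

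The one delicate point, which I expect to be the main obstacle, is the final comparison. Each weight $w(e)$ is a Euclidean distance, so deciding $\sum_{e\in E}w(e)<W$ amounts to determining the sign of a sum of square roots, a quantity whose sign is not known to be computable in polynomial time on a Turing machine (the \emph{sum of square roots} problem); this is the same obstruction that leaves membership of Euclidean TSP in NP open. I would sidestep it exactly as is standard for Euclidean network problems, by working in the real-RAM model, in which arithmetic and comparisons on the point coordinates cost $O(1)$. In this model the third check is immediate, the whole verifier runs in polynomial time, and therefore \textsc{Min-$k$CSG} lies in NP.
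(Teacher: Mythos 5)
Your proof takes essentially the same approach as the paper: the certificate is the edge set $E$ itself, the verifier checks connectivity of $(S_c,E_c)$ for each primary color $c$ and compares $\sum_{e\in E}w(e)$ against $W$. You are in fact more careful than the paper, whose proof simply asserts the weight comparison can be done in $O(|E|)$ time and never acknowledges the sum-of-square-roots obstruction that you correctly flag and address via the real-RAM convention.
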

	\begin{proof}
Given a set of edges $E$, we can verify if $(S,E)$ is a $CSG(S,\alpha)$ in $O(k|S|)$ time by testing connectivity in $(S_c,E_c)$ for each primary color $c\in\{1,\ldots,k\}$, and then check whether $\sum_{e\in E}w(e)\le W$ in $O(|E|)$ time.
	\end{proof}

	We reduce \textsc{Min-3CSG} from \textsc{Planar-Monotone-3SAT}, which is known to be NP-complete~\cite{BK12}.
	For every instance $A$ of \textsc{Planar-Monotone-3SAT}, we construct an instance $f(A)$ of \textsc{Min-3CSG}.
	An instance $A$ consists of a plane bipartite graph between $n$ variable and $m$ clause vertices such that every clause has degree three or two, all variables lie on the $x$-axis and edges do not cross the $x$-axis.
	Clauses are called \emph{positive} if they are in the upper half-plane or \emph{negative} otherwise.
	The problem asks for an assignment from the variable set to $\{\texttt{true}, \texttt{false}\}$ such that each positive (negative) clause is adjacent to a \texttt{true} (\texttt{false}) variable.
	
    \begin{figure}[htp]
		\centering
		%\graphicspath{{../images/}}
		\includegraphics [width=\textwidth]{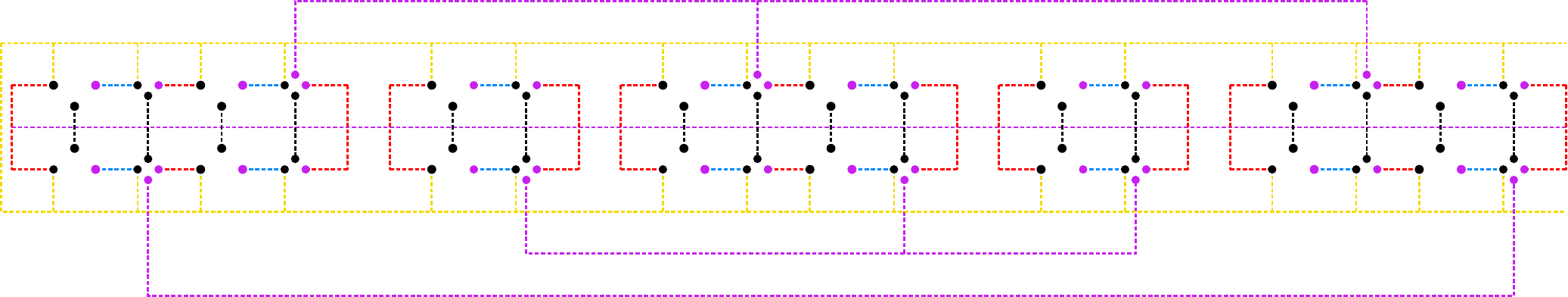}
		\caption{Construction for an instance
		$A$ equivalent to the boolean formula $(x_1 \vee x_3 \vee x_5)
		 \wedge (\neg x_1 \vee \neg x_5)
		 \wedge (\neg x_2 \vee \neg x_3 \vee \neg x_4)$.
		}
		\label{fig:reduction-complete}
	\end{figure}

	Given an instance $A$ of \textsc{Planar-Monotone-3SAT}, we construct $f(A)$ as shown in Fig.~\ref {fig:reduction-complete}
 (a single variable gadget is shown in Fig.~\ref{fig:reduction} in the Appendix).
	The points marked with small disks are called \emph{active} and they are the only multi-chromatic points in the construction.
	The dashed lines in a primary color represent a chain of equidistant monochromatic points, where the gap between consecutive points is $\eps$.  A purple (resp., black) dashed line represents a red and a blue (resp., a red, a blue, and a yellow) dashed line that run $\eps$ close to each other.	Informally, the value of $\eps$ is set small enough such that every point in the interior of a dashed line is adjacent to its neighbors in any minimum CSG. The boolean assignment of $A$ is encoded in the edges connecting active points.
	We break the construction down to gadgets and explain their behavior individually.

	The long horizontal purple dashed line is called \emph{spine} and the set of yellow dashed lines (shown in Fig.~\ref{fig:gadgets}(a)) is called \emph{cage}.
	The rest of the construction consists of variable and clause gadgets (shown in Figs.~\ref{fig:gadgets}(b) and (c)).	
	The width of a variable gadget depends on the degree of the corresponding variable in the bipartite graph given by the instance $A$.
	For every edge incident to the variable, we repeat the middle part of the gadget as shown in Fig.~\ref{fig:gadgets}(b)
   (cf. Fig.~\ref{fig:reduction}, where a variable of degree-2 is shown).
	The vertical black dashed lines are called \emph{ribs} and the set of three or four active points close to an endpoint of a rib is called \emph{switch}.
	The variable gadget contains switches of two different sizes alternately from left to right.
	A \emph{2-switch} (resp., \emph{2$\delta$-switch}) is a switch in which active points are at most 2 (resp., $2\delta$) apart.
	The clause gadgets are positioned as the embedding of clauses in $A$; refer to Fig.~\ref {fig:reduction-complete}.
	Each active point of a positive (negative) clause is assigned to a $2\delta$-switch and positioned vertically above (below) the active point of the rib, at distance  $2\delta$ from it.
	
	Let $E'$ be the set of all monochromatic edges of a minimum CSG computable by Lemma~\ref{lem:monocrhomatic}.
	%such that a monochromatic path connects each point represented by a dashed line to some active point, by Lemma~\ref{lem:monocrhomatic}.
	Let $r$ be the number of edges in the bipartite graph of $A$.
	The instance $f(A)$ contains $13r$ active points, so $(S,E')$ contains $13r$ connected components.
	By construction, the number of $\eps$-edges in a solution of $f(A)$ between components of $(S,E')$ is upper bounded by $39r$ (one edge per color per component).
	Finally, we set $W=(\sum_{e\in E'}w(e))+39r\eps+r(2+2\sqrt{2})+r\delta(2+2\sqrt{2})+m\delta(2\sqrt{2}-2)$ and we choose $\eps= \frac{1}{500r^2}$ and  $\delta=\frac{1}{10r}$.
	This particular choice of $\eps$ and $\delta$ is justified by the proofs of Corollaries~\ref{cor:2-pair} and \ref{cor:clause}.
	By construction, $f(A)$ has the following property:

    \begin{enumerate}
	\item[(I)] For every partition of the components of $(S_c,E'_c)$ into two sets $C_1, C_2$, where $c$ is a primary color, let $\{p_1,p_2\}$ be the shortest edge between $C_1$ and $C_2$. Then either $w(\{p_1,p_2\})=\eps$ or $p_1$ and $p_2$ are active points in the same switch.
	\end{enumerate}
	
	\begin{figure}[t]
		\centering
		\def\svgwidth{0.75\columnwidth}
		%\graphicspath{{../images/}}
		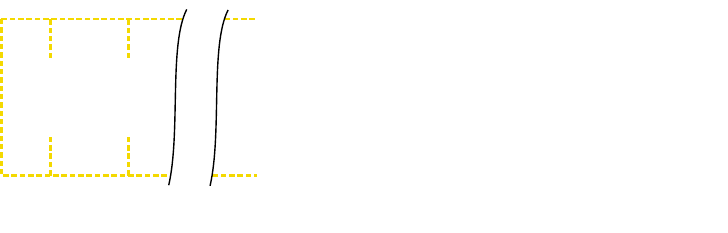
		\caption{(a) Cage. (b) Variable gadget. (c) Clause gadget.}
		\label{fig:gadgets}
	\end{figure}
	
	\begin{definition}
A \emph{standard solution} of \textsc{Min-3CSG} is a solution that contains $E'$ and in  which every edge longer than $\eps$ is between two  active points of the same switch.
	\end{definition}
		
	\begin{lemma}
		\label{lem:sat2csg}
		Let $A$ be a positive instance of \textsc{Planar-Monotone-3SAT}. Then $f(A)$ is a positive instance of \textsc{Min-3CSG}.
	\end{lemma}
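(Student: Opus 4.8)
The plan is to turn a satisfying assignment $\phi$ of $A$ into a \emph{standard} solution of $f(A)$ whose cost does not exceed $W$. I would begin with the edge set $E'$ of Lemma~\ref{lem:monocrhomatic}, which contributes exactly $\sum_{e\in E'}w(e)$ and leaves $13r$ components, one per active point. It then remains to add edges that make each of the three induced subgraphs connected while fitting inside the residual budget $39r\eps+r(2+2\sqrt 2)+r\delta(2+2\sqrt 2)+m\delta(2\sqrt 2-2)$. I would add $\eps$-edges to stitch together the monochromatic chains along the spine, the cage, and the ribs, and use every edge longer than $\eps$ only between two active points inside a common switch, so that the solution is standard by definition and property~(I) applies.

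The core of the argument is the encoding of $\phi$ in the switches of each variable gadget. For a variable $x_i$ I would exhibit two combinatorially distinct ways of wiring the active points of its $2$-switches and $2\delta$-switches, a \texttt{true} pattern and a \texttt{false} pattern, and verify the key symmetry that both have the \emph{same} total length, contributing $2+2\sqrt 2$ per $2$-switch and $\delta(2+2\sqrt 2)$ per $2\delta$-switch. Choosing for every variable the pattern dictated by $\phi(x_i)$ therefore costs the fixed amount $r(2+2\sqrt 2)+r\delta(2+2\sqrt 2)$ independently of $\phi$ (summing over the $r$ incident edges of the bipartite graph), and it propagates a consistent signal upward along the ribs toward positive clauses and downward toward negative ones.

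For the clause gadgets I would invoke that $\phi$ satisfies $A$: each positive (negative) clause is adjacent to at least one \texttt{true} (\texttt{false}) variable, whose gadget, in the pattern just fixed, presents at the corresponding $2\delta$-switch the ``shifted'' configuration that lets the clause's active point (sitting at distance $2\delta$ above, resp.\ below, the rib) be attached within budget. I would charge each clause the amount $\delta(2\sqrt 2-2)$, matching the final term of $W$. Summing the four contributions gives total cost at most $W$; the strict inequality demanded by the decision version follows because the switch edges simultaneously merge components of their own colors, so the actual number of $\eps$-edges used is strictly below the crude bound $39r$, and $\eps=\tfrac{1}{500r^2}$ is small enough that this slack is not overtaken by the $\eps$-term.

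The hard part will be the global connectivity check: I must confirm that, under the $\phi$-dictated switch patterns, each of $(S_1,E_1)$, $(S_2,E_2)$, $(S_3,E_3)$ is connected. This is where the gadgets must compose correctly, so I would trace, color by color, that the purple spine together with its $\eps$-chains glues every red and every blue component, that the yellow cage glues the yellow ones, and that the chosen switch edges plus a single satisfying literal per clause close every remaining gap. The delicate point is that a variable feeding several clauses must serve all of them from one truth value; this is automatic because $\phi$ is one global assignment and the rib transmits the same signal to each incident clause, but making the bookkeeping of exactly which active points each pattern joins and checking that neither the \texttt{true} nor the \texttt{false} pattern ever strands a color is, I expect, the most error-prone step.
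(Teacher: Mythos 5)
Your proposal is correct and follows essentially the same route as the paper's proof: start from $E'$ plus the $\eps$-edges along the dashed lines, wire each variable's switches in one of two reflection-symmetric (equal-cost) patterns according to the satisfying assignment, connect each clause's active point through one satisfying literal at an extra cost of $\delta(2\sqrt{2}-2)$, and verify the total against $W$ — your bookkeeping ($r(2+2\sqrt{2})$ for the $2$-pairs, $r\delta(2+2\sqrt{2})+m\delta(2\sqrt{2}-2)$ for the $2\delta$-pairs) matches the paper's exactly. You even treat the strict-inequality threshold of the decision version more explicitly than the paper does.
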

	
    To prove the lemma, we construct a standard solution for $f(A)$ based on the solution for $A$. This proof, and subsequent proofs, argues about all possible ways to connect the vertices in a switch of $f(A)$. The most efficient ones are shown in Fig.~\ref {fig:cases}; these may appear in an optimal solution. Refer to Fig.~\ref {fig:switch} in the Appendix for a full list.	
	
%\MAARTEN {I moved the full proofs of all following lemmas to the appendix, because they all refer to many subfigures of Fig.~\ref {fig:switch}. I think we can describe the main idea of the correctness of the construction here by referring only to Fig.~\ref{fig:cases}.}	
	
	\begin{figure}[h]
		\centering	
			\def\svgwidth{0.9\columnwidth}
			%\graphicspath{{../images/}}
			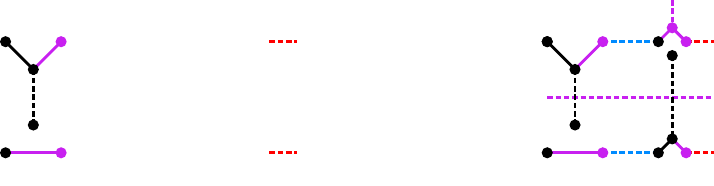
		\caption{Possible ways to connect the vertices in a switch of $f(A)$. (a) One of the two states of a 2-switch, encoding the truth value of the variable. (b) The two possible states of a $2\delta$-switch if the incident clause is not satisfied through this variable. (c) The only possible state of a $2\delta$-switch if the incident clause is satisfied through this variable.}
		\label{fig:cases}
	\end{figure}

	\begin{lemma}
		\label{lem:existance}
		If $f(A)$ is a positive instance of \textsc{Min-3CSG}, there exists a standard solution for this instance.
	\end{lemma}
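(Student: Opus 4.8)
The plan is to start from an arbitrary minimum CSG and normalize it into a standard solution through a sequence of cost--non-increasing edge exchanges driven by Property~(I). Since $f(A)$ is a positive instance, a minimum CSG $G^*=(S,E^{\opt})$ has cost $w(G^*)<W$, and it suffices to transform $G^*$ into a CSG of no greater cost that contains $E'$ and whose every edge longer than $\eps$ joins two active points of a common switch. First I would arrange that $E'\subseteq E^{\opt}$: applying Lemma~\ref{lem:monocrhomatic} successively for each primary color $c\in\{1,2,3\}$ produces a minimum CSG containing $E'_c$, and because each such application only exchanges monochromatic edges of color $c$, it leaves the monochromatic edges of the other colors untouched, so the three applications compose. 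Hence we may assume the $13r$ components of $(S,E')$ are present in $G^*$, and every other edge exists only to connect these components in one or more colors.

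Call an edge $e\in E^{\opt}\setminus E'$ \emph{bad} if $w(e)>\eps$ and $e$ does not join two active points of the same switch. I first claim that a minimum CSG contains no bad \emph{multi-chromatic} edge. Such an edge joins two active points lying in distinct switches, hence its length is bounded below by the inter-switch spacing. Deleting it and inserting, for each color $c\in\alpha(e)$ in which $e$ is a bridge of $G^*_c$, the Property~(I) replacement for the resulting cut (which is either an $\eps$-edge or a within-switch edge, and is no longer than $e$) would yield a CSG whose cost I expect to be \emph{strictly} smaller, contradicting minimality. Next, using minimality again, every monochromatic edge of $G^*$ is a bridge of its color class, so while a bad monochromatic edge $e$ of color $c$ exists, deleting it splits $S_c$ into sets $C_1,C_2$ which (because $E'\subseteq E^{\opt}$) partition the components of $(S_c,E'_c)$; by Property~(I) the shortest edge across this cut is an $\eps$-edge or lies inside a switch and has length at most $w(e)$, so exchanging it for $e$ preserves minimality and, since the inserted edge is by construction not bad, strictly decreases the number of bad edges. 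Because no existing edge can change its bad/good status under these operations, the process terminates at a minimum CSG with no bad edges, which is exactly a standard solution of cost $w(G^*)<W$.

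The crux is the multi-chromatic claim, since replacing one bad edge by up to three per-color repairs could a priori increase the total length; this is precisely where the gadget geometry and the choices $\eps=\frac{1}{500r^2}$ and $\delta=\frac{1}{10r}$ are used. Two active points in distinct switches are separated by a distance bounded below by the inter-rib spacing read off from Fig.~\ref{fig:gadgets} (the labels $6$, $8$, etc.), whereas each repair is either an $\eps$-edge of length $\eps$ or one of the short within-switch connections of Fig.~\ref{fig:cases} of length $O(1)$ or $O(\delta)$; I would check that in every case the sum of the at most three repairs is strictly below the length of the bad edge. I expect the main obstacle to be organizing this case analysis over all types of multi-chromatic inter-switch edges and confirming that the constants fold into the budget already reflected in the definition of $W$ (as foreshadowed by the reference to Corollaries~\ref{cor:2-pair} and~\ref{cor:clause}); this is the step most dependent on the exact coordinates of the construction, while the remaining monochromatic normalization is a direct reuse of the exchange argument of Lemma~\ref{lem:monocrhomatic}.
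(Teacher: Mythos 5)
Your proposal follows essentially the same route as the paper's proof: normalize to a minimum CSG containing $E'$ via Lemma~\ref{lem:monocrhomatic}, then repeatedly exchange any edge longer than $\eps$ that does not join two active points of a common switch for the replacement(s) guaranteed by Property~(I), handling the inter-switch case by up to three per-color repairs. The quantitative check you defer is exactly how the paper closes the argument: an edge between active points of different switches has length greater than $6$ by construction, while each repair has length at most $2$ (or $\eps$), so the at most three repairs together cost no more than the deleted edge.
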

	
	Before proving the other direction of the reduction, we show some properties of a standard solution.
	The active points in a switch impose some local constraints.
	The black and purple points attached to horizontal dashed lines determine the \emph{switch constraint}:
	since these points have more colors than their incident dashed lines, they each are incident to at least one edge in the switch.
	Each rib determines a \emph{rib constraint} to a pair of switches that contain its endpoints: at least one of these switches must contain an edge between its black active points or else there is no yellow path between this rib and the cage.
	The following lemmas establish some bounds on the length of the edges used to satisfy local constraints of a pair of switches adjacent to a rib.
	We refer to this pair as a $2$-pair or $2\delta$-pair according to the type of the switch.

	\begin{lemma}
		\label{lem:min}
		In a standard solution, the minimum length required to satisfy the local constraints of a 2-pair (resp., $2\delta$-pair) is $2(1+\sqrt{2})$ (resp., $2\delta(1+\sqrt{2})$).
	\end{lemma}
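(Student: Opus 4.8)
The plan is to prove the stated value as matching lower and upper bounds on the minimum total length of edges satisfying the local constraints of a pair. Since a $2\delta$-pair is the $\delta$-scaled copy of a $2$-pair and all Euclidean lengths scale linearly, it suffices to treat the $2$-pair and multiply by $\delta$ for the $2\delta$-pair.

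First I would reduce the problem to a single switch. In a standard solution every edge longer than $\eps$ joins two active points of the \emph{same} switch, and an $\eps$-edge only connects an active point to its monochromatic neighbour on a dashed line, which cannot carry the color responsible for the switch or rib constraint; hence all edges counted toward these constraints lie inside one of the two switches, and the two switches of a pair contribute disjoint edge sets. For the \emph{lower bound} it then suffices to show that each switch alone requires length at least $1+\sqrt2$. Since the active points of a switch are at most $2$ apart, a single edge has length at most $2<1+\sqrt2$, so at least two edges are needed; enumerating (Fig.~\ref{fig:switch}) the combinatorially distinct edge sets that make every black and purple point of the switch incident to a switch-edge, and computing each length from the fixed inter-point distances, one verifies that the cheapest valid set has total length exactly $1+\sqrt2$. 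Summing over the two switches, and noting that the rib constraint only adds requirements and so cannot lower the length, gives the lower bound $2(1+\sqrt2)$ (resp.\ $2\delta(1+\sqrt2)$).

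For the \emph{upper bound} I would exhibit a matching configuration, namely the efficient states drawn in Fig.~\ref{fig:cases}: place in each switch a length-$(1+\sqrt2)$ edge set satisfying its switch constraint, chosen so that in at least one of the two switches this set includes an edge between the two black active points. This single choice simultaneously satisfies both switch constraints and the rib constraint, and has total length $2(1+\sqrt2)$; scaling the whole configuration by $\delta$ handles the $2\delta$-pair.

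The main obstacle is the lower-bound enumeration. One must argue that it is exhaustive and must use the exact pairwise distances to show that no pair of short edges can cover both the black and the purple point: the geometry forces at least one diagonal edge of length $\sqrt2$ in every valid set, which is what rules out a cheaper total such as two unit edges and pins the minimum at $1+\sqrt2$. A secondary point, needed for the upper bound, is to confirm that some length-minimal switch configuration contains a black--black edge, so that meeting the rib constraint incurs no extra cost beyond the per-switch minima.
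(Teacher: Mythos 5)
Your proposal's central step---decomposing the pair into two independent switches and proving a per-switch lower bound of $1+\sqrt{2}$---does not work, and the failure is concrete. Every edge available inside a 2-switch has length $2$ or $\sqrt{2}$ (this is the key observation in the paper's own proof), so the cost of any edge set in a single switch has the form $2a+b\sqrt{2}$ with $a,b$ nonnegative integers; no such quantity equals $1+\sqrt{2}$, so the per-switch minimum you propose to ``verify by enumeration'' cannot be exactly $1+\sqrt{2}$, and your matching upper-bound construction (a length-$(1+\sqrt{2})$ edge set placed in \emph{each} switch) does not exist either. More importantly, the true optimum $2(1+\sqrt{2})=2+2\sqrt{2}$ is attained by \emph{three} edges (one of length $2$, two of length $\sqrt{2}$) distributed asymmetrically over the pair: two edges lie in one switch and only a single edge in the other, so that other switch has local cost $\sqrt{2}$ or $2$, both strictly below $1+\sqrt{2}$. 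Hence the pair bound is genuinely joint and cannot be obtained by summing per-switch minima; any decomposition argument of the kind you outline is refuted by the optimal configuration itself. (Also, your remark about ruling out ``two unit edges'' has no referent---no unit-length edges occur in a switch.)

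The paper's proof instead counts edges over the pair as a whole: by the rib constraint, at least one switch (WLOG the upper one) must contain a black--black edge; the switch constraints of the pair then still require at least two further edges, so any feasible local solution has at least three edges. One then enumerates the three-edge configurations satisfying all constraints (Fig.~\ref{fig:switch}(a),(b)) and finds the minimum $2+2\sqrt{2}$, while every four-edge solution costs at least $4\sqrt{2}>2+2\sqrt{2}$ because each edge has length at least $\sqrt{2}$. If you want to repair your write-up, replace the per-switch decomposition with this joint three-edge counting argument; the scaling remark reducing the $2\delta$-pair to the $2$-pair is fine and matches the paper.
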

	
	\begin{corollary}
		\label{cor:2-pair}
		In a standard solution, every 2-pair is connected minimally.
	\end{corollary}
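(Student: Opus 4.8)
The plan is a global budget argument: the threshold $W$ is chosen so tightly that, after paying the forced costs, no $2$-pair has any slack left to use anything other than a minimum configuration. I would fix an arbitrary standard solution $(S,E)$, which by definition contains $E'$ and has cost at most $W$, and decompose its cost as $\sum_{e\in E}w(e)=\sum_{e\in E'}w(e)+L_\eps+L_2+L_{2\delta}$, where $L_\eps$ is the total length of the $\eps$-edges and $L_2$ (resp.\ $L_{2\delta}$) is the total length of the edges longer than $\eps$ that lie in $2$-switches (resp.\ $2\delta$-switches). By the definition of a standard solution every edge longer than $\eps$ lies inside some switch, and every switch belongs to a $2$-pair or a $2\delta$-pair, so this decomposition is exhaustive.

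Next I would lower-bound the forced terms. The construction contains $r$ $2$-pairs and $r$ $2\delta$-pairs, so applying Lemma~\ref{lem:min} to each pair and summing gives $L_2\ge r\cdot 2(1+\sqrt2)=r(2+2\sqrt2)$ and $L_{2\delta}\ge r\cdot 2\delta(1+\sqrt2)=r\delta(2+2\sqrt2)$. Substituting $\sum_{e\in E}w(e)\le W$, canceling the fixed term $\sum_{e\in E'}w(e)$, discarding $L_\eps\ge 0$, and using the lower bound on $L_{2\delta}$ to cancel the $r\delta(2+2\sqrt2)$ summand of $W$, I obtain a bound on the aggregate excess of the $2$-pairs over their individual minima, namely $L_2-r(2+2\sqrt2)\le 39r\eps+m\delta(2\sqrt2-2)$. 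Since the left-hand side is a sum of nonnegative per-pair excesses (each $2$-pair costs at least $2(1+\sqrt2)$ by Lemma~\ref{lem:min}), it suffices to show that a single non-minimal $2$-pair already exceeds the right-hand side.

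The crux — and the step I expect to be the main obstacle — is to certify a \emph{discrete gap}: any connection of a $2$-pair that is not a minimum configuration of Fig.~\ref{fig:cases}(a) costs at least $2(1+\sqrt2)+\gamma$ for some fixed constant $\gamma>0$ independent of $r$, $\eps$, and $\delta$. This requires inspecting the finitely many ways to satisfy the switch and rib constraints of a $2$-pair (the full list underlying Fig.~\ref{fig:switch}) and checking that the second-cheapest option is bounded away from the cheapest; because all relevant distances in a $2$-switch are of order $1$ and never of order $\delta$, this $\gamma$ is a universal constant. Granting it, the numeric choices $\eps=\frac{1}{500r^2}$ and $\delta=\frac{1}{10r}$, together with $m\le r/2$ (each clause has degree at least two), make the right-hand side at most $\frac{39}{500r}+\frac{\sqrt2-1}{10}$, which the case analysis shows is strictly below $\gamma$. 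Hence if some $2$-pair were non-minimal, the aggregate excess would already exceed $39r\eps+m\delta(2\sqrt2-2)$, contradicting the displayed bound; so every $2$-pair is connected minimally. To avoid any circularity with Corollary~\ref{cor:clause}, I would deliberately use only the weak lower bound $L_{2\delta}\ge r\delta(2+2\sqrt2)$ here and assume nothing about which clause switches are in the satisfied state.
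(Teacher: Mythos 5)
Your proof is correct, and it follows the same overall strategy as the paper's: a global budget argument against $W$, using the per-pair lower bounds of Lemma~\ref{lem:min} plus a discrete gap between the cheapest and second-cheapest 2-pair configurations. The difference lies in the accounting, and it is a difference worth noting. The paper's proof lower-bounds the total cost by $\sum_{e\in E'}w(e)+(r-1)\cdot 2(1+\sqrt{2})+(\text{cost of the bad pair})$, i.e.\ it discards the $\eps$-edges \emph{and} the $2\delta$-pairs entirely; the resulting slack against $W$ is $39r\eps+r\delta(2+2\sqrt{2})+m\delta(2\sqrt{2}-2)\approx 0.60$, so the contradiction only goes through if the bad pair costs at least $4\sqrt{2}$ (gap $2\sqrt{2}-2\approx 0.83$) --- which sits uneasily with the paper's own statement that the second minimum (Fig.~\ref{fig:switch}(b)) is $4+\sqrt{2}$, whose gap $2-\sqrt{2}\approx 0.59$ would \emph{not} beat a slack of $0.60$. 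Your version cancels the $r\delta(2+2\sqrt{2})$ term of $W$ against the Lemma~\ref{lem:min} lower bound on the $2\delta$-pairs (legitimately, since that lemma covers both pair types and is proved before this corollary, so no circularity with Corollary~\ref{cor:clause}), leaving a slack of only $39r\eps+m\delta(2\sqrt{2}-2)\leq \frac{39}{500}+\frac{\sqrt{2}-1}{10}<0.12$. Hence any universal gap $\gamma\geq 2-\sqrt{2}$ suffices, and that gap is exactly what the enumeration behind Lemma~\ref{lem:min} (edges of length $2$ or $\sqrt{2}$, three-edge solutions listed in Fig.~\ref{fig:switch}(a)--(b), four-edge solutions costing at least $4\sqrt{2}$) certifies. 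So your bookkeeping is tighter and makes the numerics robust at the point where the paper's own constants are tightest; the one step you flagged as the crux --- establishing the constant gap $\gamma$ --- is precisely the case analysis the paper performs to identify the second minimum, so nothing is missing.
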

		
	\begin{lemma}
		\label{lem:clause}
		In a standard solution, for each clause gadget, there exists a $2\delta$-pair with local cost at least $4\delta\sqrt{2}$.
	\end{lemma}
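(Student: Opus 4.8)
The plan is to show that the requirement that a clause gadget be connected to the spine and to the cage forces at least one of its incident $2\delta$-switches into the ``connecting'' configuration of Fig.~\ref{fig:cases}(c), whose $2\delta$-pair then costs $4\delta\sqrt 2$, strictly more than the minimum $2\delta(1+\sqrt 2)$ guaranteed by Lemma~\ref{lem:min}.

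First I would fix a clause gadget and restrict attention to how its active points reach the rest of the graph. In a standard solution every edge longer than $\eps$ lies inside a switch, so the clause points can connect to the spine (red and blue) and to the cage (yellow) only through edges inside the $2\delta$-switches at the tips of the ribs incident to the clause; every other connection is an $\eps$-edge along a dashed line. Since each clause point sits at distance $2\delta$ directly above (for a positive clause) or below (for a negative clause) the rib's active point, propagating any of its colors down into the rib requires an edge inside that rib's $2\delta$-switch. I would then argue that if every $2\delta$-switch incident to the clause were in one of the cheap states of Fig.~\ref{fig:cases}(b), then for some primary color there would be no path from the clause points through any rib to the spine or the cage, so the induced subgraph of that color would be disconnected, contradicting that the solution is a CSG. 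Hence at least one incident $2\delta$-switch must use a configuration that actually carries a color of the clause point into its rib.

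Finally, I would appeal to the complete enumeration of switch configurations in Fig.~\ref{fig:switch} and verify that every configuration of a $2\delta$-switch which connects a clause point into its rib, while still meeting the switch and rib constraints, costs at least $4\delta\sqrt 2$ on its pair, the minimum being attained exactly by Fig.~\ref{fig:cases}(c). Together with the previous step this yields a $2\delta$-pair of local cost at least $4\delta\sqrt 2$, as claimed. I expect the connectivity step to be the main obstacle: one must rule out every combination of cheap switch states for both clause degrees (two and three) and both polarities, and in particular handle the possibility that different colors of the clause are routed out through different ribs, checking that such a split still leaves some $2\delta$-pair at cost at least $4\delta\sqrt 2$. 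By contrast, the final cost comparison is a routine check over the finitely many configurations.
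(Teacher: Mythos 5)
Your proposal is correct and follows essentially the same route as the paper: the paper likewise argues that clause connectivity (specifically, the need for a blue path from a clause active point to the spine) forces some incident $2\delta$-pair into a ``connecting'' configuration, and then enumerates the configurations in Fig.~\ref{fig:switch}(d)--(g) to conclude that any such configuration costs at least $4\delta\sqrt{2}$. The only difference is one of precision: the paper immediately pins down blue as the critical color and tracks a single blue path, which sidesteps the case analysis over colors and ribs that you flag as the main obstacle.
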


	\begin{corollary}
		\label{cor:clause}
		%In a standard solution, for each clause gadget, there exist a $2\delta$-pair connected as Fig.~\ref{fig:switch}(d). All other $2\delta$-pairs are connected minimally as shown in Fig.~\ref{fig:switch}(c).
		In a standard solution, for each clause gadget, there exists a $2\delta$-pair connected as Fig.~\ref{fig:cases}(c). All other $2\delta$-pairs are connected minimally as shown in Fig.~\ref{fig:cases}(b).
		%\MAARTEN {I changed the figure references from Fig.~\ref {fig:switch} to Fig.~\ref {fig:cases}.}
	\end{corollary}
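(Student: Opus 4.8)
The plan is a tight budgeting argument: the threshold $W$ is calibrated so that the cheapest standard solution compatible with Lemma~\ref{lem:clause} already accounts for the entire budget, leaving no room to deviate from the configurations of Fig.~\ref{fig:cases}.

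First I would decompose the cost of a standard solution. By definition it contains $E'$, and every edge longer than $\eps$ joins two active points of the same switch; the remaining edges have length exactly $\eps$ and connect distinct components of $(S,E')$, so there are at most $39r$ of them. Writing $L$ for the total length of the switch edges and $N_\eps\le 39r$ for the number of $\eps$-edges, the cost is $\sum_{e\in E'}w(e)+N_\eps\eps+L$. Grouping the switch edges by the rib (equivalently, the bipartite edge) they serve splits $L$ into the local costs of the $r$ 2-pairs and the $r$ $2\delta$-pairs.

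Next I would bound $L$ from below. By Corollary~\ref{cor:2-pair} each 2-pair costs exactly $2(1+\sqrt{2})$, totalling $r(2+2\sqrt{2})$. Since each bipartite edge belongs to a unique clause, the $2\delta$-pairs supplied by Lemma~\ref{lem:clause} (one per clause) are pairwise distinct; these $m$ pairs cost at least $4\delta\sqrt{2}$ each, and by Lemma~\ref{lem:min} the other $r-m$ $2\delta$-pairs cost at least $2\delta(1+\sqrt{2})$ each (note $r\ge 2m$). Hence
\[
L \ge r(2+2\sqrt{2}) + 4m\delta\sqrt{2} + (r-m)\,2\delta(1+\sqrt{2}).
\]
Using $4\delta\sqrt{2}-2\delta(1+\sqrt{2})=2\delta(\sqrt{2}-1)$, the right-hand side equals $r(2+2\sqrt{2})+r\delta(2+2\sqrt{2})+m\delta(2\sqrt{2}-2)$, which is exactly $W-\sum_{e\in E'}w(e)-39r\eps$.

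Finally I would close by slack. Since $f(A)$ is a positive instance, the cost is below $W$, and as $N_\eps\ge 0$ this forces $L$ to exceed its lower bound by strictly less than $39r\eps$. With $\eps=\frac{1}{500r^2}$ and $\delta=\frac{1}{10r}$ one checks $39r\eps<2\delta(\sqrt{2}-1)$, so the slack is smaller than the amount by which any non-extremal switch state would raise a pair's local cost. Every inequality above is therefore an equality: each of the $m$ distinguished $2\delta$-pairs costs exactly $4\delta\sqrt{2}$, i.e.\ is in the state of Fig.~\ref{fig:cases}(c), and every other $2\delta$-pair costs exactly $2\delta(1+\sqrt{2})$, i.e.\ is minimal as in Fig.~\ref{fig:cases}(b). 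The crux is this last step: I must read off from the full enumeration of switch states (Fig.~\ref{fig:switch}) that the costs $4\delta\sqrt{2}$ and $2\delta(1+\sqrt{2})$ pin down exactly the configurations (c) and (b), and that the next admissible cost is at least $2\delta(\sqrt{2}-1)$ larger, which is precisely what the chosen $\eps$ is small enough to exclude.
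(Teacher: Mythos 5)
Your budget decomposition and equality-forcing step are essentially the paper's own accounting: the paper phrases it as a contradiction with a ``second minimal configuration,'' but the arithmetic --- the identity $4m\delta\sqrt{2}+(r-m)\,2\delta(1+\sqrt{2})=r\delta(2+2\sqrt{2})+m\delta(2\sqrt{2}-2)$, the use of Corollary~\ref{cor:2-pair}, Lemma~\ref{lem:min} and Lemma~\ref{lem:clause}, and the comparison of the slack $39r\eps$ against the $\delta$-gaps --- is the same. The genuine gap is in your final step, the one you yourself call the crux: local cost $4\delta\sqrt{2}$ does \emph{not} pin down the configuration of Fig.~\ref{fig:cases}(c). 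According to the paper's proof of Lemma~\ref{lem:clause}, there are \emph{two} distinct admissible switch states of cost exactly $4\delta\sqrt{2}$, shown in Fig.~\ref{fig:switch}(d) and~(e), and only (d) is the state of Fig.~\ref{fig:cases}(c). Since they cost the same, no budgeting argument, however tight, can separate them; ``reading off'' the enumeration gives you a two-element set of minimizers, not a unique one.

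The paper closes this hole with a connectivity argument that is entirely absent from your proposal: if a distinguished $2\delta$-pair were connected as in Fig.~\ref{fig:switch}(e), there would be no local red path between the corresponding clause gadget and the spine; a red path routed through a different switch would have to live inside a minimally connected $2\delta$-pair (by the very budget you computed, no other pair can afford extra edges), and the only candidate configuration, Fig.~\ref{fig:conter-expamples}(d), leaves a red component disconnected from the spine, so $S_\texttt{r}$ would induce a disconnected graph. So your cost argument legitimately yields ``each distinguished pair has cost exactly $4\delta\sqrt{2}$ and each other $2\delta$-pair is minimal,'' but the statement's conclusion ``connected as Fig.~\ref{fig:cases}(c)'' requires this additional red-connectivity step. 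A secondary, smaller point: your claim that the next admissible local cost exceeds the minimum by at least $2\delta(\sqrt{2}-1)$ is itself a fact about the enumeration of switch states satisfying the switch and rib constraints (the paper asserts it as the ``second minimal configuration''); it should be stated as such rather than as something implied by the cost values alone.
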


	\begin{lemma}
		\label{lem:csg2sat}
		Let $f(A)$ be a positive instance of \textsc{Min-3CSG}. Then $A$ is a positive instance of \textsc{Planar-Monotone-3SAT}.
	\end{lemma}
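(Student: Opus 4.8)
The plan is to read a satisfying truth assignment for $A$ directly off a cheap solution of $f(A)$. Since $f(A)$ is a positive instance, Lemma~\ref{lem:existance} provides a standard solution $G$ witnessing this, i.e., of cost below $W$. Because $G$ is standard, it contains all of $E'$ and every edge of $G$ longer than $\eps$ joins two active points inside a single switch; hence the only freedom in $G$ lies in how each switch is wired, and it is exactly these local wirings that encode a boolean assignment. The global length lower bounds — $2(1+\sqrt{2})$ per $2$-pair and $2\delta(1+\sqrt{2})$ per $2\delta$-pair from Lemma~\ref{lem:min}, plus $4\delta\sqrt{2}$ for one distinguished $2\delta$-pair per clause from Lemma~\ref{lem:clause} — sum to $W$, so the budget is tight and forces every pair to be connected as cheaply as its constraints allow.

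Next I would extract the assignment from the $2$-switches. By Corollary~\ref{cor:2-pair}, every $2$-pair of $G$ is connected minimally, so each $2$-switch is in one of the two states of Fig.~\ref{fig:cases}(a). I would first argue that, within a single variable gadget, all $2$-switches must be in the \emph{same} state: consecutive switches are coupled through the ribs by the rib and switch constraints, and any disagreement would force a repair edge of length $\Theta(1)$ between two components of $(S_c,E'_c)$, which is impossible in a minimally connected $2$-pair and far beyond the slack in $W$. This lets me define $\phi(x_i)=\texttt{true}$ precisely when the $2$-switches of the gadget of $x_i$ are in the \texttt{true} state, yielding a well-defined assignment $\phi$.

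Finally I would verify that $\phi$ satisfies every clause of $A$. Fix a clause gadget $C_j$. By Corollary~\ref{cor:clause}, exactly one of its $2\delta$-pairs is connected as in Fig.~\ref{fig:cases}(c) — the unique \emph{satisfying} state — while its remaining $2\delta$-pairs are minimal as in Fig.~\ref{fig:cases}(b). Let $x_i$ be the variable whose $2\delta$-switch realizes state (c) for $C_j$. The crux is to show that state (c) is compatible with the budget only when the adjacent $2$-switch of $x_i$ carries the truth value matching the polarity of $C_j$: \texttt{true} for a positive clause (whose active points sit $2\delta$ above the rib) and \texttt{false} for a negative one. Indeed, state (c) forces a specific pair of edges inside the $2\delta$-switch that shares its yellow (resp.\ red and blue) connectivity with the incident $2$-switch; were the $2$-switch in the opposite state, restoring the induced connectivity would cost an extra edge of length $\Theta(1)$ rather than $O(\delta)$, and the calibration $\eps=\tfrac{1}{500r^2}\ll\delta=\tfrac{1}{10r}\ll 1$ behind $W$ rules this out. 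Consequently each positive clause is incident to a \texttt{true} variable and each negative clause to a \texttt{false} variable, so $\phi$ satisfies $A$ and $A$ is a positive instance of \textsc{Planar-Monotone-3SAT}.

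I expect the third step to be the main obstacle: pinning down, by a local case analysis over the admissible switch wirings (the full list in Fig.~\ref{fig:switch}), exactly how state (c) of a clause's $2\delta$-switch constrains the truth value recorded at the incident $2$-switch, and confirming that every mismatch incurs an $\Omega(1)$ penalty that the tight budget forbids. Lemma~\ref{lem:min} and Corollaries~\ref{cor:2-pair} and~\ref{cor:clause} already handle each pair in isolation; the remaining work is to combine these local facts consistently across the whole construction, in particular to secure the intra-variable agreement used to define $\phi$ and the polarity-correctness used to satisfy each clause.
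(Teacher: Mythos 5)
Your skeleton matches the paper's proof: obtain a standard solution (Lemma~\ref{lem:existance}), use Corollaries~\ref{cor:2-pair} and~\ref{cor:clause} to pin every pair to its cheapest admissible states, argue intra-variable agreement, read off an assignment, and check clause polarity. But the two steps you yourself flag as the crux --- agreement of all $2$-pairs within a variable gadget, and compatibility of the clause's satisfying state with the adjacent $2$-switch's orientation --- are exactly the content of the paper's proof, and you have not supplied them; you have only sketched a mechanism for them, and that mechanism is the wrong one. You argue that a mismatch would ``force a repair edge of length $\Theta(1)$\ldots far beyond the slack in $W$,'' i.e., a cost overrun. This reasoning cannot be run: once Corollaries~\ref{cor:2-pair} and~\ref{cor:clause} are in force, every pair is already connected at exactly its minimum admissible cost, so the budget is fully spent and a mismatched configuration costs \emph{no more} than a matched one. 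Moreover, in a standard solution there are by definition no edges longer than $\eps$ between distinct switches, so no ``repair edge'' exists at any price inside the class of solutions under consideration. The correct contradiction is not monetary but topological: the paper enumerates the mismatch configurations (Fig.~\ref{fig:conter-expamples}) and shows that each one leaves the induced subgraph on $S_\texttt{r}$ disconnected --- configurations (a) and (c) strand a red component at the top or bottom of the construction, (b) severs every red path between the clause gadget and the spine, and (d) was already excluded in the proof of Corollary~\ref{cor:clause} --- so the graph fails to be a CSG at all, contradicting that it is a solution.

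Concretely, then, the gap is the case analysis itself: without enumerating the admissible joint wirings of consecutive pairs and verifying red connectivity in each, the claims ``all $2$-switches of a gadget agree'' and ``state (c) forces the matching polarity'' remain assertions. Your color bookkeeping in the third step (attributing the coupling to yellow, resp.\ red and blue, connectivity) also does not match the construction, where yellow enters only through the rib constraint and the decisive global constraint in both steps is redness of the spine-to-clause paths. Replace the $\Omega(1)$-penalty argument by this connectivity argument and carry out the finite case check, and your outline becomes the paper's proof.
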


	The following theorem is a direct consequence of Lemmata~\ref{lem:np}, \ref{lem:sat2csg}, and \ref{lem:csg2sat}.
	
	\begin{theorem}
		\textsc{Min-$k$CSG} is NP-complete for $k\ge3$.
	\end{theorem}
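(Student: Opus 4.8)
The plan is to establish the two defining properties of NP-completeness separately: membership in NP and NP-hardness. Membership is already in hand, since Lemma~\ref{lem:np} shows \textsc{Min-$k$CSG} is in NP for every $k$, so nothing further is needed there. For hardness I would first settle the base case $k=3$ by arguing that $f$ is a polynomial-time many-one reduction from \textsc{Planar-Monotone-3SAT} --- which is NP-complete~\cite{BK12} --- to \textsc{Min-3CSG}, and then lift the result to all $k>3$ by a trivial padding reduction.

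For the base case, two things must be checked. First, that $f$ is computable in polynomial time: the construction uses $13r$ active points, where $r$ is the number of edges of the bipartite graph of $A$, together with the monochromatic chains realizing the dashed lines. Each chain of length $L$ contributes $O(L/\eps)$ points, and since $1/\eps = 500r^2$ and the gadgets each occupy a region of size bounded by a fixed polynomial in $r$ (they are laid out following the planar embedding of $A$), the total number of points of $f(A)$, and the bit-complexity of the coordinates, $\eps$, $\delta$, and $W$, is polynomial in $r$ and hence in $|A|$. Second, that $f$ preserves the answer: Lemma~\ref{lem:sat2csg} gives the forward implication (a satisfying assignment yields a standard solution of cost below $W$) and Lemma~\ref{lem:csg2sat} gives the converse (any solution of $f(A)$ of cost below $W$ induces a satisfying assignment). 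Together these state exactly that $A$ is a positive instance of \textsc{Planar-Monotone-3SAT} if and only if $f(A)$ is a positive instance of \textsc{Min-3CSG}, so $f$ is a valid reduction and \textsc{Min-3CSG} is NP-hard.

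To extend to arbitrary $k>3$, I would reduce \textsc{Min-3CSG} to \textsc{Min-$k$CSG} by padding: given a three-color instance $(S,\alpha)$ with threshold $W$, introduce $k-3$ new points $q_4,\dots,q_k$ and set $\alpha(q_c)=\{c\}$. Each new color class $S_c=\{q_c\}$ is a single vertex, hence trivially connected and requiring no incident edge; moreover, since $c\notin\{1,2,3\}$, any edge incident to $q_c$ shares no primary color with its other endpoint and so contributes to no connectivity constraint. Thus the family of colored spanning graphs and their costs are unchanged for the original three colors, and $(S,\alpha)$ admits a CSG of cost below $W$ if and only if the padded instance does. This padding is clearly computable in polynomial time, so \textsc{Min-$k$CSG} is NP-hard for every $k\ge 3$, and combined with Lemma~\ref{lem:np} it is NP-complete.

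I expect the only genuine content of this argument to reside in the two correctness lemmas~\ref{lem:sat2csg} and~\ref{lem:csg2sat}, which encode the entire gadget analysis and are assumed here; the remaining steps are routine. The one place deserving care is the polynomial size bound on $f(A)$: because the dashed lines are discretized at spacing $\eps=1/(500r^2)$, one must confirm that their total length stays polynomially bounded so that the $\eps$-chains do not blow up the instance, which holds precisely because every gadget is confined to a region of polynomial extent.
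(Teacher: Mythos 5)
Your proposal is correct and follows the paper's own route: NP membership via Lemma~\ref{lem:np} and hardness via the reduction $f$ from \textsc{Planar-Monotone-3SAT}, whose correctness is exactly Lemmata~\ref{lem:sat2csg} and~\ref{lem:csg2sat}; the paper states the theorem as a direct consequence of these three lemmata. You additionally spell out two points the paper leaves implicit --- the polynomial bound on the size of $f(A)$ (the $\eps$-chains) and the padding step lifting hardness from $k=3$ to all $k\ge 3$ --- both of which are handled correctly.
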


\section{Approximation}
\label{sec:approximation}
%
%\begin{lemma}
%	Let $S'$ be a monochromatic subset of $E^*$ colored with the color $c$. If $K$ is a connected component induced by $S'$ containing monochromatic points colored $c$, and $e\in E$ is the lightest monochromatic edge between $K$ and another component, then $e\in E^*$.
%\end{lemma}
%\begin{proof}
%	For contradiction, consider that $e\notin E^*$.
%	Then, the addition of $e$ to $E^*$ introduces a cycle of color $c$.
%	This cycle must contain at least one other monochromatic edge, since $K$ in monochromatic.
%	In particular, the cycle contain an edge $e'$ between a vertex in $K$ and one not in $K$.
%	Notice that removing $e'$ can only affect the connectivity of $c$.
%	By definition $w(e)<w(e')$ and we can replace $e'$ by $e$ in $E^*$ creating a better solution.
%	\qed\end{proof}

Hurtado et al.~\cite{HKK16} gave an approximation algorithm for \textsc{Min-$k$CSG} that runs in $O(n\log n)$ time and achieves a ratio of $\lceil k/2\rceil +\lfloor  k/2\rfloor\varrho/2$, where $\varrho$ is the Steiner ratio.
The value of $\varrho$ is not known and the current best upper bound is $\varrho\leq 1.21$ by Chung and Graham~\cite{CG86} (Gilbert and Pollack~\cite{GP68} conjectured $\varrho =\frac{2}{\sqrt{3}}\approx 1.15$).
For the special case $k=3$, the previous best approximation ratio is $2+\varrho/2\leq 2.6$.
We improve the approximation ratio to 2, and then further to 1.81.
Our first algorithm immediately generalises to $k \ge 3$, and yields an $\lceil k/2 \rceil$-approximation, improving on the general result by Hurtado et al.; our second algorithm also generalizes to $k>3$, however, we do not know whether it achieves a good ratio.

Suppose we are given an instance of \textsc{Min-3CSG} defined by $(S,\alpha)$  where $|S|=n$ and the set of primary colors is  $\{$\texttt{r,b,y}$\}$.
We define $\alpha_{\texttt{rb}}:S_\texttt{r}\cup S_\texttt{b}\rightarrow 2^{\{r,b\}}\setminus\{\emptyset\}$ where $\alpha_{\texttt{rb}}(p)=\alpha(p)\setminus\{\texttt{y}\}$.
Let $G^*$ be an optimal solution for \textsc{Min-3CSG}, and put $\opt=\|G^*\|$.
Algorithm A1 computes a minimum red-blue-purple graph $G_\texttt{rb}=CSG(S_\texttt{r}\cup S_\texttt{b},\alpha_\texttt{rb})$ in $O(n\log n+m^6)$ time, where $m=|S_\texttt{r}\cap S_\texttt{b}|$ by Corollary~\ref{cor:m6}; then computes a minimum spanning tree $G_\texttt{y}$ of $S_\texttt{y}$, and returns the union $G_{\texttt{rb}}\cup G_\texttt{y}$. Since $G^*$ contains a red, a blue, and a yellow spanning tree, we have $\|G_{\texttt{rb}}\|\leq \opt$ and $\|G_\texttt{y}\|\leq \opt$; that is, Algorithm A1 returns a solution to \textsc{Min-3CSG} whose length is at most $2\opt$.

\begin{theorem}
Algorithm A1 returns a 2-approximation for \textsc{Min-3CSG}; it runs in $O(n\log n+m^6)$ time on $n$ points, $m$ of which are multi-chromatic.
\end{theorem}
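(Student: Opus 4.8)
The plan is to establish three things in turn, largely fleshing out the argument already sketched in the paragraph preceding the statement: feasibility of the graph $G:=G_{\texttt{rb}}\cup G_\texttt{y}$ returned by Algorithm A1, the two cost bounds that together give the factor $2$, and the running time.

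First I would verify feasibility by checking connectivity for each of the three primary colors. For red (and symmetrically for blue), the subgraph of $G_{\texttt{rb}}$ induced on $S_\texttt{r}$ is already connected, since $G_{\texttt{rb}}$ is by construction a valid CSG for the two-color instance $\alpha_\texttt{rb}$; taking the union with $G_\texttt{y}$ only adds edges, so the red-induced subgraph of $G$ stays connected. For yellow, $G_\texttt{y}$ is a spanning tree of $S_\texttt{y}$ and is connected on its own, and again adding the edges of $G_{\texttt{rb}}$ preserves this. Hence $G$ is a $CSG(S,\alpha)$.

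Next I would prove the two cost inequalities. For $\|G_{\texttt{rb}}\|\le\opt$: because $G^*$ is a valid $3$CSG, its red- and blue-induced subgraphs are each connected, and these subgraphs use only edges with both endpoints in $S_\texttt{r}\cup S_\texttt{b}$. Therefore the subgraph of $G^*$ induced on $S_\texttt{r}\cup S_\texttt{b}$ is itself a feasible solution for the instance $\alpha_\texttt{rb}$, of cost at most $\|G^*\|=\opt$; since $G_{\texttt{rb}}$ is a minimum such solution, $\|G_{\texttt{rb}}\|\le\opt$. For $\|G_\texttt{y}\|\le\opt$: the yellow-induced subgraph of $G^*$ is a connected spanning subgraph of $S_\texttt{y}$, so its cost is at least that of a minimum spanning tree of $S_\texttt{y}$; as $G_\texttt{y}$ is such an MST, $\|G_\texttt{y}\|\le\|G^*\|=\opt$. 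Summing, $\|G\|\le\|G_{\texttt{rb}}\|+\|G_\texttt{y}\|\le 2\opt$. Finally, $G_{\texttt{rb}}$ is computed in $O(n\log n+m^6)$ time by Corollary~\ref{cor:m6}, the MST $G_\texttt{y}$ in $O(n\log n)$, and the union in $O(n)$, for a total of $O(n\log n+m^6)$; here I would note $m=|S_\texttt{r}\cap S_\texttt{b}|$ is at most the total number of multi-chromatic points.

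The argument is short, and the closest thing to an obstacle is the pair of cost inequalities—specifically the structural observation that restricting the optimal $3$CSG to the relevant vertex set (for the red-blue bound) or to a single color class (for the yellow bound) yields a feasible solution of no greater cost for the corresponding subproblem. Once that is in place, both the factor $2$ and the running time follow immediately.
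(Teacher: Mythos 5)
Your proposal is correct and follows essentially the same route as the paper, which proves this theorem in the paragraph immediately preceding it: $\|G_{\texttt{rb}}\|\le\opt$ and $\|G_{\texttt{y}}\|\le\opt$ because the optimal solution $G^*$ restricted to the appropriate color classes is feasible for each subproblem, and the running time comes from Corollary~\ref{cor:m6} plus an MST computation. You merely spell out the feasibility check and the restriction argument that the paper leaves implicit.
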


Algorithm A1 can be extended to $k$ colors by partitioning the primary colors into $\lceil\frac{k}{2}\rceil$ groups of at most two and computing the minimum CSG for each group. The union of these graphs is a $\lceil\frac{k}{2}\rceil$-approximation that can be computed in $O(kn^6)$ time.

Algorithm A2 computes six solutions for a given instance of \textsc{Min-3CSG}, $G_1,\ldots , G_6$, and returns one with minimum weight.
Graph $G_1$ is the union of $G_\texttt{rb}$ and $G_\texttt{y}$ defined above.
Graphs $G_2$ and $G_3$ are defined analogously: $G_2=G_\texttt{ry}\cup G_\texttt{b}$ and $G_3=G_\texttt{by}\cup G_\texttt{r}$, each of which can be computed in $O(n^6)$ time by~\cite{HKK16}.
Let $S_\texttt{rby}\subseteq S$ be the set of ``black'' points that have all three colors, and let $H$ be an MST of $S_\texttt{rby}$, which can be computed in $O(n\log n)$ time. We augment $H$ into a solution of  \textsc{Min-3CSG} in three different ways as follows.
First, let $G_{\texttt{rb}:H}$ be the minimum forest such that $H\cup G_{\texttt{rb}:H}$ is a minimum red-blue-purple spanning graph on $S_\texttt{r}\cup S_\texttt{b}$.
$G_{\texttt{rb}:H}$ can be computed in $O(n\log n+m^6)$ time by the same matroid intersection algorithm as in Corollary~\ref{cor:m6}, by setting the weight of any edge between components containing black points to zero. Similarly, let $G_{\texttt{y}:H}$ be the minimum forest such that $H\cup G_{\texttt{y}:H}$ is a spanning tree on $S_\texttt{y}$, which can be computed in $O(n\log n)$ time by Prim's algorithm.
Now we let $G_4=H\cup G_{\texttt{rb}:H}\cup G_{\texttt{y}:H}$. Similarly, let $G_5=H\cup G_{\texttt{ry}:H}\cup G_{\texttt{b}:H}$ and  $G_6=H\cup G_{\texttt{by}:H}\cup G_{\texttt{r}:H}$.

\begin{theorem}
Algorithm A2 returns a $(2-\frac{1}{3+2\varrho})$-approximation for \textsc{Min-3CSG}; it runs in $O(n\log n+m^6)$ time on an input of $n$ points, $m$ of which are multi-chromatic.
\end{theorem}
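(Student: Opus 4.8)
The plan is to exploit that Algorithm A2 returns $\min_{1\le i\le 6}\|G_i\|$, so it suffices to exhibit nonnegative weights $\lambda_1,\dots,\lambda_6$ with $\sum_i\lambda_i=1$ and $\sum_i\lambda_i\|G_i\|\le(2-\frac{1}{3+2\varrho})\opt$; the returned minimum is then at most this value. I would fix an optimal solution $G^*$ with $\|G^*\|=\opt$ and record two quantities: $\opt$ itself, and the weight $z=\|\mathrm{SMT}(S_\texttt{rby})\|$ of a minimum Steiner tree of the tri-chromatic (``black'') points. Since each of the three color-trees of $G^*$ spans the black points, $z\le\opt$, and by definition of the Steiner ratio $\|H\|\le\varrho z$, where $H=\mathrm{MST}(S_\texttt{rby})$ is the tree used in $G_4,G_5,G_6$.

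First I would bound the three \emph{separated} solutions $G_1,G_2,G_3$. Restricting $G^*$ to the edges carrying a given pair of colors yields a feasible two-color CSG, so $\|G_\texttt{rb}\|,\|G_\texttt{ry}\|,\|G_\texttt{by}\|\le\opt$, and the single-color spanning trees satisfy $\|G_\texttt{r}\|,\|G_\texttt{b}\|,\|G_\texttt{y}\|\le\opt$; hence $\|G_i\|\le 2\opt$ for $i\le 3$. The sharper statement I actually need is that the average $\tfrac13(\|G_1\|+\|G_2\|+\|G_3\|)$ approaches $2\opt$ only when the three color-trees of $G^*$ overlap heavily, which is exactly the regime in which the black structure, hence $z$, is large. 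Charging each edge of $G^*$ to the colors in $\alpha(e)$ and summing over the three pairings turns this overlap into an explicit linear expression in $\opt$ and $z$ that is increasing in $z$.

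Next I would bound the three \emph{shared} solutions $G_4,G_5,G_6$. The point is that $H$ connects all black points once yet serves all three colors, whereas in $G_1,G_2,G_3$ the black points are reconnected separately inside the two-color CSG and inside the single-color tree. Using the matroid augmentation of Corollary~\ref{cor:m6} (with the length of every edge joining two components that each contain a black point set to zero), I would argue that a completion of $H$ to a two-color CSG, respectively to a one-color spanning tree, can be obtained from $G^*$ by letting $H$ take over the role of the sub-structure that only reconnects black points; each such substitution saves black-reconnecting weight while $H$ is paid only once at cost $\le\varrho z$. This makes $\tfrac13(\|G_4\|+\|G_5\|+\|G_6\|)$ \emph{decreasing} in $z$, in opposition to the separated bound. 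Balancing the two families by choosing the convex combination that cancels the common overlap terms, the two opposing dependences on $z$ combine to give $2-\frac{1}{3+2\varrho}=\frac{5+4\varrho}{3+2\varrho}$; the $3$ in the denominator comes from the three symmetric color-pairings, and the $2\varrho$ from the Steiner factor entering once for the two-color CSG and once for the single-color tree whose black connectivity $H$ subsumes. The running time is immediate: $H$ and the single-color trees cost $O(n\log n)$, each distinct two-color graph costs $O(n\log n+m^6)$ by Corollary~\ref{cor:m6}, and the augmentations reuse the same matroid-intersection routine on a ground set of size $O(m^2)$.

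The hard part will be the shared-solution bound. One must show that after inserting $H$ the black-reconnecting weight can genuinely be deleted from \emph{both} the two-color CSG and the one-color tree without disconnecting the remaining monochromatic vertices, and must quantify how much is saved. The subtlety is that an edge on a black-to-black path may also be the unique connection of a monochromatic vertex (or a non-black Steiner vertex of the optimal black-connecting subtree), so the removable black-connectors cannot be deleted naively; they must be identified through the matroid/exchange structure used to define $G_{\texttt{rb}:H}$, which is what guarantees optimality of the augmentation. Controlling the Steiner-ratio overhead $\|H\|\le\varrho z$ against this saving, and tying $z$ back to $\opt$ and to the pairwise overlap appearing in the separated bound, is the step that pins down the exact constant $3+2\varrho$, and I expect it to be the most delicate part of the argument.
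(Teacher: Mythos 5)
Your high-level architecture is the same as the paper's: two families of solutions (separated $G_1,G_2,G_3$ and shared $G_4,G_5,G_6$) whose cost bounds depend in opposite directions on a measure of color overlap in the optimum, balanced at the stated ratio. The genuine gap is your choice of balancing parameter. You parametrize by $z$, the minimum Steiner tree weight of the black points $S_\texttt{rby}$, and claim the shared-solution bound is decreasing in $z$. That claim is false. The saving that $H$ provides in $G_4,G_5,G_6$ is that the completions no longer need the \emph{tri-chromatic edges of the optimum}: in the paper's notation, $\|G_{\texttt{rb}:H}\|+\|G_{\texttt{y}:H}\|\leq(\opt-\|E_\texttt{rby}^*\|)+\|E_\texttt{ry}^*\|+\|E_\texttt{by}^*\|$, so the saved quantity is $\|E_\texttt{rby}^*\|$, which can be zero even when $z=\Theta(\opt)$. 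Concretely: take two black points $A,B$ at distance $1$, a dense line of purple (red-blue) points on segment $AB$, and a dense arc of green (blue-yellow) points just above it. The optimum uses the purple line plus the green arc, contains no tri-chromatic edge, and has $\opt\approx 2$, while $z=1$; yet each of $G_4,G_5,G_6$ costs about $2\opt$, because the edge $AB$ in $H$ subsumes nothing the optimum actually pays for (meanwhile $G_1\leq\frac{5}{3}\opt$ here, consistent with the paper's bound). So with parameter $z$ the two families do \emph{not} have opposing monotone dependences, and your balancing step collapses. The same confusion infects your separated bound: charging edges of $G^*$ over the three pairings yields an expression in the seven quantities $\|E_\gamma^*\|$, and the relevant tri-chromatic term is again $\|E_\texttt{rby}^*\|$, not $z$.

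The paper instead sets $\beta=\|E_\texttt{rby}^*\|/\opt$, the fraction of the optimum's weight carried by tri-chromatic edges. Then $\|G_1\|\leq\frac{5+\beta}{3}\opt$ (increasing in $\beta$) and $\|G_4\|\leq\left(\frac{5}{3}+\frac{2}{3}(\varrho-\beta-\beta\varrho)\right)\opt$ (decreasing in $\beta$), and equating the two at $\beta=\frac{2\varrho}{3+2\varrho}$ gives exactly $2-\frac{1}{3+2\varrho}$. Two further ingredients your sketch lacks are needed for this exact constant: (i) the refined bound $\|H\|\leq\|E_\texttt{rby}^*\|+\varrho\left(\|E_\texttt{y}^*\|+\|E_\texttt{ry}^*\|+\|E_\texttt{by}^*\|\right)$, in which the black forest of the optimum enters \emph{without} the factor $\varrho$ --- your cruder $\|H\|\leq\varrho z\leq\varrho\,\opt$ provably yields a strictly worse constant even if everything else were repaired; and (ii) a pigeonhole choice of the lightest color class (WLOG yellow, legitimate because the algorithm returns the minimum over all six graphs) giving $2\|E_\texttt{y}^*\|+\|E_\texttt{ry}^*\|+\|E_\texttt{by}^*\|\leq\frac{2}{3}(1-\beta)\opt$, which is what bounds both the completion terms and the Steiner completion of $H$. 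Your symmetric averaging over the three pairings could plausibly substitute for (ii), but the correct parameter $\beta$ and the refined bound (i) are indispensable.
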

\begin{proof}
Consider an instance $(S,\alpha)$ of \textsc{Min-3CSG}, and let $G^*=(S,E^*)$ be an optimal solution with $\|E^*\|=\opt$. Partition $E^*$ into $7$ subsets: for every color $\gamma\in 2^{\{\texttt{r},\texttt{b},\texttt{y}\}}\setminus \emptyset$, let $E_\gamma^*=\{e\in E^*: \alpha(e)=\gamma\}$, that is $E_\gamma^*$ is the set of edges of color $\gamma$ in $G^*$. Put $\beta=\|E_\texttt{rby}^*\|/\opt$.
Then we have $2(1-\beta)\opt
=(2\|E_\texttt{r}^*\|+\|E_\texttt{rb}^*\|+\|E_\texttt{ry}^*\|)
+(2\|E_\texttt{b}^*\|+\|E_\texttt{rb}^*\|+\|E_\texttt{by}^*\|)
+(2\|E_\texttt{y}^*\|+\|E_\texttt{ry}^*\|+\|E_\texttt{by}^*\|)$.
Without loss of generality, we may assume
$2\|E_\texttt{y}^*\|+\|E_\texttt{ry}^*\|+ \|E_\texttt{by}^*\|\leq \frac23(1-\beta)\opt$.
%
%Without loss of generality, $\|E_{\texttt{by}}^*\|\leq \|E_\texttt{ry}^*\|\leq \|E_\texttt{rb}^*\|$, which implies
%$\|E_\texttt{ry}^*\|+ \|E_\texttt{by}^*\|
%\leq \frac{2}{3}(\|E_\texttt{rb}^*\| + \|E_\texttt{ry}^*\|+ \|E_\texttt{by}^*\|)
%\leq \frac23(1-\beta)\opt$.

First, consider $G_1=G_\texttt{rb}\cup G_\texttt{y}$. The edges of $G^*$ whose colors include red or blue (resp., yellow)
form a connected graph on $S_\texttt{r}\cup S_\texttt{b}$ (resp., $S_\texttt{y})$. Consequently,
\begin{eqnarray}
\|G_\texttt{rb}\|
&\leq& \|E_\texttt{r}^*\|+ \|E_\texttt{b}^*\| + \| E_\texttt{rb}^*\| + \| E_\texttt{ry}^*\| + \|E_\texttt{by}^*\| +\| E_\texttt{rby}^*\|. \label{eq:Grb}\\
\|G_\texttt{y}\|
&\leq & \|E_\texttt{y}^*\|+ \| E_\texttt{ry}^*\| + \|E_\texttt{by}^*\| +\| E_\texttt{rby}^*\|.\label{eq:Gy}
\end{eqnarray}
	The combination of \eqref{eq:Grb} and \eqref{eq:Gy} yields
	\begin{eqnarray}
	\|G_1\| &\leq & \|G_\texttt{rb}\| + \|G_\texttt{y}\|
	\leq  \opt +  \| E_\texttt{ry}^*\| + \|E_\texttt{by}^*\| +\| E_\texttt{rby}^*\| \nonumber\\
	&\leq&\opt + \frac{2}{3}(1-\beta)\cdot\opt+\beta\cdot\opt
	=\frac{5+\beta}{3}\opt. \label{eq:g1}
	\end{eqnarray}
	
Next, consider $G_4=H\cup G_{\texttt{rb}:H}\cup G_{\texttt{y}:H}$. The edges of $G^*$ whose colors include yellow contain a spanning tree on $S_\texttt{y}$, hence a Steiner tree on the black points $S_\texttt{rby}$. Specifically, the black edges in $E_{\texttt{rby}}^*$ form a black spanning forest, which is completed to a Steiner tree by some of the edges of $E_\texttt{y}^*\cup E_\texttt{by}^*\cup E_\texttt{ry}^*$. This implies
	\begin{eqnarray*}
		\|H\|
		&\leq &\|E_\texttt{rby}^*\|+ \varrho\cdot (\|E_\texttt{y}^*\| +\|E_\texttt{by}^*\|+\|E_\texttt{ry}^*\|)\\
		&\leq & \beta\cdot \opt+ \varrho\frac{2}{3}(1-\beta)\cdot \opt
		= \left(\beta+\frac23 \varrho -\frac23 \beta\varrho\right)\opt.
	\end{eqnarray*}
	Since $H$ is a spanning tree on the black vertices $S_{rby}$, \eqref{eq:Grb} and \eqref{eq:Gy} reduce to
	\begin{eqnarray}
	\|G_{\texttt{rb}:H}\| &\leq& \|E_\texttt{r}^*\|+ \|E_\texttt{b}^*\| + \| E_\texttt{rb}^*\| + \| E_\texttt{ry}^*\| + \|E_\texttt{by}^*\|,\label{eq:Grb+}\\
	\|G_{\texttt{y}:H}\| &\leq& \|E_\texttt{y}^*\| + \| E_\texttt{ry}^*\| + \|E_\texttt{by}^*\|.\label{eq:Gy+}
	\end{eqnarray}
	The combination of \eqref{eq:Grb+} and \eqref{eq:Gy+} yields
	\begin{eqnarray*}
		\|G_{\texttt{rb}:H}\| +\|G_{\texttt{y}:H}\|
		&\leq& (\opt-\|E_\texttt{rby}^*\|)+ \| E_\texttt{ry}^*\| + \| E_\texttt{by}^*\|\\
		&\leq& (1-\beta)\cdot \opt + \frac23(1-\beta)\cdot \opt
        = \frac53(1-\beta)\cdot \opt.
	\end{eqnarray*}
	Therefore,
	\begin{equation}\label{eq:g4}
	\|G_4\|
	= \|H\| + \|G_{\texttt{rb}:H} \| + \|G_{\texttt{y}:H}\|
	\leq  \left(\frac53+\frac23 ( \varrho -\beta -\beta\varrho)\right)\opt.
	\end{equation}
	
	If we set $\beta=\frac{2\varrho}{3+2\varrho}$, then
	both \eqref{eq:g1} and \eqref{eq:g4} give the same upper bound
	$$\frac{\min(\|G_1\|,\|G_4\|)}{\opt} \leq \frac{5+\beta}{3}=2-\frac{1}{3+2\varrho}\leq 1.816,$$
	where we used the current best upper bound  for the Steiner ratio $\varrho\leq 1.21$ from~\cite{CG86}.
	\end{proof}

\section{Collinear points}
\label{sec:collinear}

In this section we consider instances of \textsc{Min-$k$CSG}, $(S,\alpha)$, where $k\geq 3$ and $S$ consists of collinear points. An example is shown in Fig.~\ref {fig:example-1d}.
Without loss of generality, $S=\{p_1,\ldots , p_n\}$ and the points $p_i$, $1\leq i\leq n$, lie on the $x$-axis sorted by $x$-coordinates.
We present a dynamic programming algorithm that solves \textsc{Min-$k$CSG} in $2^{O(k^2 2^k)}\cdot n$ time.

	\begin{figure}[htp]
		\centering
		%\graphicspath{{../images/}}
		\includegraphics{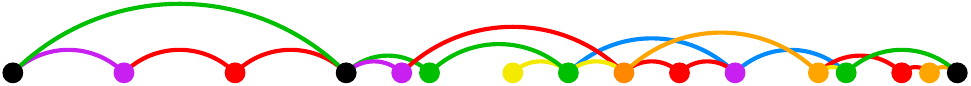}
		\caption{An example with optimal solution for collinear points.}
		\label{fig:example-1d}
	\end{figure}

%%%%%%%%%%%%%%%%%%%%%%%%%%%%%%%%%%%%%%
Our first observation is that if the points in $S$ are collinear, we may assume that every edge satisfies the following property.
\begin{quote}
	If $\{p_a,p_b\}$, $a<b$, is an edge, then there is no $r$, $a<r<b$, such that $\alpha(\{p_a,p_b\})\subseteq \alpha(p_r)$. \hfill ($\star$)
\end{quote}

\begin{lemma}\label{lem:unique-color}
For every graph $G=(S,E)$, there exists a graph $G'=(S,E')$ of the same cost that satisfies ($\star$) and for each color $c\in \{1,\ldots, k\}$, every component of $(S_c,E_c)$ is contained in some component of $(S_c,E_c')$. In particular, \textsc{Min-$k$CSG} has a solution with property ($\star$).
\end{lemma}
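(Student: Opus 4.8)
The plan is to repair property ($\star$) one edge at a time by \emph{subdividing} each offending edge along the very points that witness its failure, exploiting the fact that the points are collinear so that Euclidean lengths add up along the line.

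Concretely, I would proceed as follows. Call an edge $e=\{p_a,p_b\}$ with $a<b$ \emph{bad} if it violates ($\star$), and write $\gamma=\alpha(e)=\alpha(p_a)\cap\alpha(p_b)$. Let $a=r_0<r_1<\cdots<r_s=b$ be \emph{all} indices $r$ with $a\le r\le b$ and $\gamma\subseteq\alpha(p_r)$; this set contains $a$ and $b$, and it contains at least one interior index precisely because $e$ is bad. I would replace $e$ by the path $p_{r_0}p_{r_1}\cdots p_{r_s}$, do this simultaneously for every bad edge of $G$, and keep all good edges of $G$ untouched; call the result $G'=(S,E')$.

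Then I would verify three claims. First, \emph{total length is preserved}: since the points lie on a line in index order, $w(e)=x_b-x_a=\sum_{j=0}^{s-1}(x_{r_{j+1}}-x_{r_j})=\sum_j w(\{p_{r_j},p_{r_{j+1}}\})$, so each subdivision leaves the sum of edge lengths unchanged. Second, \emph{every new edge satisfies ($\star$)}: for $e'=\{p_{r_j},p_{r_{j+1}}\}$ we have $\alpha(e')\supseteq\gamma$, so any index $q$ with $r_j<q<r_{j+1}$ and $\alpha(e')\subseteq\alpha(p_q)$ would satisfy $\gamma\subseteq\alpha(p_q)$ and hence be a $\gamma$-index strictly between two \emph{consecutive} members of $\{r_0,\dots,r_s\}$, which is impossible since we split at all $\gamma$-indices. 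Third, \emph{connectivity only coarsens}: for a color $c\in\gamma$ every $p_{r_j}$ has color $c$ and every new edge carries $c$, so the path reconnects $p_a$ and $p_b$ in $(S_c,E'_c)$; and for $c\notin\gamma$ the edge $e$ never belonged to $E_c$, so $E_c$ is unaffected. Hence for every $c$ each component of $(S_c,E_c)$ lies inside a component of $(S_c,E'_c)$.

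The key point making this work in a single pass is that ($\star$) is a property of an individual edge relative to the point set, not of the graph as a whole: once the surviving good edges and all freshly created sub-edges satisfy ($\star$), so does all of $G'$, with no need to iterate (and thus no termination argument to worry about). I expect the only real subtlety to be that a newly created sub-edge may coincide with an already-present edge, or with a sub-edge produced from another bad edge. Merging such coincident edges to keep $G'$ simple can only decrease the cost and cannot destroy ($\star$) (a per-edge condition) or any color-connectivity, so in general one gets $\|G'\|\le\|G\|$, with equality exactly when no coincidences occur. Finally, for the ``in particular'' statement I would apply the construction to a minimum CSG $G$: since $G'$ is again a CSG (connectivity is preserved) with $\|G'\|\le\|G\|=\opt$, minimality forces $\|G'\|=\opt$, exhibiting an optimal solution that satisfies ($\star$).
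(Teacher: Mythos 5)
Your proof is correct, and it rests on the same core mechanism as the paper's: subdivide offending edges at points $p_r$ with $\gamma\subseteq\alpha(p_r)$, using collinearity so that subdivision preserves total length, and using the color containment so that both sub-edges still carry every color in $\gamma$, which can only coarsen the color components. The execution differs, though: the paper subdivides one violating triple at a time and argues termination because the set of violations $X_G$ strictly decreases, whereas you subdivide each bad edge at \emph{all} of its $\gamma$-indices in a single pass and then must verify ($\star$) for the new edges directly. That verification is the one delicate point, since a sub-edge $e'=\{p_{r_j},p_{r_{j+1}}\}$ can have $\alpha(e')\supsetneq\gamma$, and you handle it correctly: any violator of $e'$ would in particular be a $\gamma$-point strictly between two consecutive chosen indices, which cannot exist. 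This buys you a proof with no iteration or termination bookkeeping. You are also more careful than the paper about coincident edges: if a created sub-edge duplicates an existing edge, merging makes the cost strictly drop, so the literal ``same cost'' claim only holds when no coincidences occur; the paper's proof silently assumes this away, and your fallback --- $G'$ is a CSG with $\|G'\|\le\|G\|=\opt$, so minimality forces equality --- is exactly what is needed for the ``in particular'' part of the lemma and for its use in Section~\ref{sec:collinear}.
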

\begin{proof}
	Let $G=(S,E)$ be a graph, and let $X_G$ denote the set of triples $(i,j;r)$ such that $1\leq i<r<j\leq n$, $\{p_i,p_j\}\in E$, and $\alpha(\{p_i,p_j\})\subseteq \alpha(p_r)$. If $X_G=\emptyset$, then $G$ satisfies ($\star$). Suppose $X_G\neq \emptyset$. For every triple $(i,j;r)\in X_G$, successively, replace the edge $\{p_i, p_j\}$ by two edges $\{p_i, p_r\}$ and $\{p_r, p_j\}$ (i.e., subdivide edge $\{p_i,p_j\}$ at $p_r$). Note that $\alpha(\{p_h,p_i\}),\alpha(\{p_i,p_j\})\subseteq \alpha(p_i)$, consequently $p_i$ and $p_j$ remain in the same component for each primary color $c\in \alpha(\{p_i,p_j\})$. Each step maintains the total edge length of the graph and strictly decreases $X_G$. After $|X_G|$ subdivision steps, we obtain a graph $G'=(S,E')$ as required.
\end{proof}

In the remainder of this section we assume that every edge has property ($\star$). Furthermore, all graphs considered in this section are defined on an interval of consecutive vertices of $S$.

\begin{corollary}\label{cor:unique-edge}
	Let $G=(S,E)$ be a graph and let $i\in \{1,\ldots , n\}$.
	\begin{enumerate}
		\item\label{cor:unique}  If $e\in E$ is an edge between $\{p_1,\ldots , p_i\}$ and $\{p_{i+1},\ldots, p_n\}$ and $\alpha(e)=\gamma$,
		then the endpoints of $e$ are uniquely determined.  Specifically, if $e=\{p_a,p_b\}$ with $1\leq a\leq i<b\leq n$, then
		$a\in \{1,\ldots , i\}$ is the largest index such that $\gamma\subset \alpha(p_a)$, and
		$b\in \{i+1,\ldots , n\}$ is the smallest index such that $\gamma\subset \alpha(p_b)$.
		\item\label{cor:overlap} If two edges $e_1, e_2\in E$ overlap, then $\alpha(e_1)\neq \alpha(e_2)$.
	\end{enumerate}
\end{corollary}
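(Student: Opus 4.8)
The plan is to derive both parts directly from property~($\star$), which every edge satisfies by assumption (following Lemma~\ref{lem:unique-color}). The single fact I will use repeatedly is that, by definition, $\alpha(e)=\alpha(p_a)\cap\alpha(p_b)$ for an edge $e=\{p_a,p_b\}$, so its color $\gamma=\alpha(e)$ satisfies $\gamma\subseteq\alpha(p_a)$ and $\gamma\subseteq\alpha(p_b)$; in particular, both endpoints ``see'' all of $\gamma$.

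For part~\ref{cor:unique}, I start from an edge $e=\{p_a,p_b\}$ crossing the cut between $\{p_1,\ldots,p_i\}$ and $\{p_{i+1},\ldots,p_n\}$, so $a\le i<b$ and $\gamma\subseteq\alpha(p_a)$, exhibiting $a$ as one index in $\{1,\ldots,i\}$ with $\gamma\subseteq\alpha(p_a)$. To see that $a$ is the \emph{largest} such index, I argue by contradiction: if some $a'$ satisfied $a<a'\le i$ and $\gamma\subseteq\alpha(p_{a'})$, then $a<a'<b$ (using $a'\le i<b$) and $\gamma=\alpha(\{p_a,p_b\})\subseteq\alpha(p_{a'})$, which violates~($\star$). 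The symmetric argument, applied to an index $b'$ with $i<b'<b$ and $\gamma\subseteq\alpha(p_{b'})$, shows $b$ is the smallest index in $\{i+1,\ldots,n\}$ with $\gamma\subseteq\alpha(p_b)$. Since these two extremal indices are determined by $\gamma$ and $i$ alone, the endpoints of $e$ are unique; equivalently, there is at most one edge of a given color crossing a given cut.

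For part~\ref{cor:overlap}, the main step is to reduce ``overlap'' to ``sharing a cut.'' Since the points are collinear, two overlapping edges $e_1=\{p_{a_1},p_{b_1}\}$ and $e_2=\{p_{a_2},p_{b_2}\}$ share a subsegment, i.e.\ $\max(a_1,a_2)<\min(b_1,b_2)$; choosing $i=\max(a_1,a_2)$ yields an index with $a_j\le i<b_j$ for both $j$, so both edges cross the cut at $i$. If now $\alpha(e_1)=\alpha(e_2)=\gamma$, then part~\ref{cor:unique} forces $e_1$ and $e_2$ to have identical endpoints, hence $e_1=e_2$; contrapositively, two distinct overlapping edges must have different colors.

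I do not anticipate a serious obstacle, as the content is essentially a restatement of~($\star$). The only points requiring care are (i) reading $\subset$ in the statement as $\subseteq$, which is forced because $\alpha(e)\subseteq\alpha(p_a)$ may hold with equality, and (ii) fixing the intended meaning of ``overlap'' and verifying that it produces a common cut index; once that reduction is in place, part~\ref{cor:overlap} follows immediately from the uniqueness in part~\ref{cor:unique}.
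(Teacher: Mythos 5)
Your proposal is correct and follows essentially the same route as the paper's proof: part~\ref{cor:unique} by contradiction with property~($\star$) applied to a hypothetical intermediate point, and part~\ref{cor:overlap} by observing that overlapping edges cross a common cut index and invoking the uniqueness from part~\ref{cor:unique}. Your added care about reading $\subset$ as $\subseteq$ and about formalizing ``overlap'' as $\max(a_1,a_2)<\min(b_1,b_2)$ just makes explicit what the paper leaves implicit in its WLOG normalization $a\leq i<b\leq j$.
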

\begin{proof}
	\eqref{cor:unique} Suppose, to the contrary, that there is index $j$, $a<j\leq i$, such that $\gamma\subset \alpha(p_j)$.
	Then edge $\{p_a,p_b\}$ and point $p_r$ violate ($\star$). The case that there is some $j$, $i+1\leq j<b$, leads to the same contradiction.
	
	\eqref{cor:overlap} Without loss of generality $e_1=\{p_a,p_b\}$ and $e_2=\{p_i,p_j\}$ with $a\leq i< b\leq j$.
	Then both edges $e_1$ and $e_2$ are between $\{p_1,\ldots , p_i\}$ and $\{p_{i+1},\ldots , p_n\}$, contradicting part~\eqref{cor:unique}.
\end{proof}

%%%%%%%%%%%%%%%%%%%%%%%%%%%%%%%%%%%%%%

The basis for our dynamic programming algorithm is that \textsc{Min-$k$CSG} has the \emph{optimal substructure} and \emph{overlapping substructures} properties when the points in $S$ are collinear. We introduce some notation for defining the subproblems.
For indices $1\leq a\leq b\leq n$, let $S[a,b]=\{p_a,\ldots, p_b\}$. For every graph $G=(S,E)$ and index $i\in \{1,\ldots ,n\}$, we partition the edge set $E$ into three subsets as follows: let $E_i^-$ be the set of edges induced by $S[1,i]$, $E_i^+$ the set of edges induced by $S[i+1,n]$, and $E_i^0$ the set of edges between $S[1,i]$ and $S[i+1,n]$. With this notation, \textsc{Min-$k$CSG} has the following optimal substructure property.
\begin{lemma}\label{lem:opt-subproblem}
	Let $G=(S,E)$ be a minimum CSG, $i\in \{1,\ldots , n\}$, and $\mathcal{X}$ be the family of edge sets $X_i^-$ on $S[1,i]$ such that $(S,X_i^- \cup E_i^0\cup E_i^+)$ is a CSG. Then $(S,X_i^- \cup E_i^0\cup E_i^+)$ is a minimum CSG iff $X_i^-\in \mathcal{X}$ has minimum cost.
\end{lemma}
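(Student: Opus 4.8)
The plan is to exploit the additive structure of the edge-length cost together with the fact that the three edge classes $E_i^-$, $E_i^0$, $E_i^+$ are pairwise edge-disjoint by construction (an edge lies in exactly one of them according to whether both endpoints are in $S[1,i]$, both are in $S[i+1,n]$, or the edge crosses the cut). Consequently, for every graph of the form $(S,X_i^- \cup E_i^0\cup E_i^+)$ the total length splits as $\|X_i^-\| + \|E_i^0\| + \|E_i^+\|$, and only the first summand varies as $X_i^-$ ranges over $\mathcal{X}$; the remaining part $C := \|E_i^0\| + \|E_i^+\|$ is a fixed constant. Here ``minimum CSG'' is understood globally, i.e.\ a CSG of cost $\opt$.

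First I would record that $E_i^-\in\mathcal{X}$: since $G=(S,E_i^-\cup E_i^0\cup E_i^+)$ is a CSG by hypothesis, $E_i^-$ is itself a valid left-completion, so $\mathcal{X}\neq\emptyset$. Next I would invoke the global optimality of $G$. For every $Y\in\mathcal{X}$ the graph $(S,Y\cup E_i^0\cup E_i^+)$ is a CSG, hence its cost $\|Y\|+C$ is at least $\opt$; and $E_i^-\in\mathcal{X}$ attains $\|E_i^-\|+C=\opt$. Therefore $\min_{Y\in\mathcal{X}}\|Y\| = \opt - C$, which says that the minimum cost achievable within this restricted family already equals the global optimum $\opt$.

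The equivalence then follows by chasing this single equality in both directions: $(S,X_i^-\cup E_i^0\cup E_i^+)$ is a minimum CSG iff its cost equals $\opt$, iff $\|X_i^-\|+C=\opt=\min_{Y\in\mathcal{X}}\|Y\|+C$, iff $\|X_i^-\|=\min_{Y\in\mathcal{X}}\|Y\|$, i.e.\ iff $X_i^-$ has minimum cost in $\mathcal{X}$. There is no genuine obstacle here—this is a standard optimal-substructure argument—so the only points deserving explicit care are the two structural observations that make the bookkeeping go through: that the three edge classes are disjoint (so the costs genuinely add and $C$ is independent of $X_i^-$), and that $E_i^-\in\mathcal{X}$ (so the restricted minimum coincides with the global optimum rather than merely lower-bounding it). Everything else reduces to a one-line cost comparison.
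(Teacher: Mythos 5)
Your proof is correct and is in substance the same optimal-substructure argument as the paper's: both rest on the additive cost decomposition across the disjoint classes $E_i^-$, $E_i^0$, $E_i^+$ and on the observation that $E_i^-\in\mathcal{X}$, combined with the global minimality of $G$. The only difference is presentational — the paper runs two exchange arguments by contradiction, while you prove the single equality $\min_{Y\in\mathcal{X}}\|Y\|=\opt-C$ and chase it in both directions.
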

\begin{proof}
	If $(S,X_i^- \cup E_i^0\cup E_i^+)$ is a minimum CSG, but some $Y_i^-\in \mathcal{X}$ costs less than $E_i^-$, then $(S,Y_i^-\cup E_i^0\cup E_i^+)$ would be a CSG that costs less than $G=(S,E)$, contradicting the minimality of $(S,X_i^- \cup E_i^0\cup E_i^+)$.
	If $X_i^-\in \mathcal{X}$ has minimum cost, but $G=(S,E)$ costs less than $(S,X_i^- \cup E_i^0\cup E_i^+)$, then $E_i^-\in \mathcal{X}$ would costs less than $X_i^-$, contradicting the minimality of $X_i^-\in \mathcal{X}$.
\end{proof}

Lemma~\ref{lem:opt-subproblem} immediately suggests a na\"ive algorithm for \textsc{Min-$k$CSG}: Guess the edge set $E_i^0\cup E_i^+$ of a minimum CSG $G=(S,E)$, and compute a minimum-cost set $X_i^-$ on $S[1,i]$ such that $(S,X_i^-\cup E_i^0\cup E_i^+)$ is a CSG.
However, all possible edge sets $E_i^0\cup E_i^+$ could generate $2^{\Theta(n)}$ subproblems. We reduce the number of subproblems using the overlapping subproblem property. Instead of guessing $E_i^0\cup E_i^+$, it is enough to guess the information relevant for finding the minimal cost $X_i^-$ on $S[1,i]$. First, the edges in $E_i^0$ can be uniquely determined by the set of their colors (using Corollary~\ref{cor:unique-edge}\eqref{cor:unique}). Second, the only useful information from $E_i^+$ is to tell which points in $S[1,i]$ are adjacent to the same component of $(S[i+1,n]_c,(E_i^+)_c)$, for each primary color $c\in \{1,\ldots ,k\}$. This information can be summarized by $k$ equivalence relations on the sets $(E_i^0)_1, \ldots , (E_i^0)_k$. We continue with the details.

We can encode $E_i^0$ by the set of its colors $\Gamma_i=\{\alpha(e): e\in E_i^0\}$.
For $i\in \{1,\ldots , n\}$, a set of edges $X_i^0$ between $S[1,i]$ and $S[i+1,n]$ is \emph{valid} if there exists a CSG $G=(S,E)$ such that $X_i^0=E_i^0$.
\begin{lemma}\label{lem:valid}
	For $i\in \{1,\ldots , n\}$, an edge set $X_i^0$ between $S[1,i]$ and $S[i+1,n]$ is \emph{valid} iff
%	\begin{enumerate}
%		\item[(\textbf{v1})] every edge in $X_i^0$ has property ($\star$); and
%		\item[(\textbf{v2})]
    for every primary color $c\in \{1,\ldots, k\}$, there is an edge $e\in X_i^0$ such that $c\in \alpha(e)$
    whenever both $S[1,i]_c$ and $S[i+1,n]_c$ are nonempty.
%	\end{enumerate}
\end{lemma}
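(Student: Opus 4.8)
The plan is to prove the two implications separately, recalling that $X_i^0$ is \emph{valid} precisely when it equals the cut-edge set $E_i^0$ of some CSG $G=(S,E)$, i.e.\ the set of edges of $G$ joining $S[1,i]$ to $S[i+1,n]$.

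For necessity, I would suppose $X_i^0 = E_i^0$ for a CSG $G=(S,E)$ and fix a primary color $c$ with both $S[1,i]_c$ and $S[i+1,n]_c$ nonempty. Choosing $p \in S[1,i]_c$ and $q \in S[i+1,n]_c$, the connectivity of $(S_c,E_c)$ gives a path from $p$ to $q$ all of whose edges lie in $E_c$, hence include $c$ in their color. Since $p$ and $q$ lie on opposite sides of the cut, some edge of this path crosses the cut, yielding the desired $e \in E_i^0 = X_i^0$ with $c \in \alpha(e)$. This direction uses only connectivity and needs no appeal to ($\star$).

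For sufficiency, I would suppose $X_i^0$ satisfies the stated condition and exhibit a CSG $G=(S,E)$ with $E_i^0 = X_i^0$. Set $E = X_i^0 \cup L \cup R$, where for each color $c$ the set $L$ (resp.\ $R$) contains the edges of a spanning tree of $S[1,i]_c$ (resp.\ $S[i+1,n]_c$) drawn inside the corresponding half; these spanning trees exist because any point set can be connected using edges among its own members, and they are empty when the half contains at most one such point. Because every edge of $L \cup R$ has both endpoints on the same side of the cut, the cut-edge set of $G$ is exactly $X_i^0$, so $E_i^0 = X_i^0$ holds with equality. It then remains to check that $(S_c,E_c)$ is connected for every primary color $c$: if $c$ appears in only one half, that half is already made connected by $L$ or $R$ and contains all color-$c$ points; if $c$ appears in both halves, the connected pieces produced by $L$ and $R$ are joined by the crossing edge $e \in X_i^0$ with $c \in \alpha(e)$ guaranteed by the hypothesis.

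The only genuinely delicate point is the equality $E_i^0 = X_i^0$: validity demands that $X_i^0$ be realized \emph{exactly} as the cut edges, not merely contained in them, so the augmentation must introduce no further crossing edges — which is precisely why $L$ and $R$ are restricted to within-half edges. Everything else reduces to the observation that a color class is connected iff each side is internally connected and, when both sides are occupied, linked by at least one crossing edge of that color.
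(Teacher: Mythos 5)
Your overall strategy coincides with the paper's own proof: necessity is the same cut-versus-connectivity argument, and sufficiency completes $X_i^0$ with edges that stay inside each half, so that the cut set is exactly $X_i^0$ and per-color connectivity follows from a two-case check. The necessity direction is fine as written.

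The one genuine issue is your choice of the within-half edges $L$ and $R$. The entire section operates under the standing assumption, declared right after Lemma~\ref{lem:unique-color}, that every edge of every graph considered has property ($\star$); in particular, the CSG witnessing validity is supposed to be a ($\star$)-graph, and the surrounding machinery (Corollary~\ref{cor:unique-edge}, Lemmas~\ref{lem:compatible} and~\ref{lem:part}) leans on that assumption. An arbitrary spanning tree of $S[1,i]_c$ can violate ($\star$): for instance, a star centered at the leftmost point of $S[1,i]_c$ contains an edge $\{p_a,p_b\}$ skipping an intermediate point $p_r\in S_c$ with $\alpha(\{p_a,p_b\})\subseteq\alpha(p_r)$. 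The paper sidesteps this by taking $E_i^-$ and $E_i^+$ to be \emph{all} edges with property ($\star$) on $S[1,i]$ and on $S[i+1,n]$, and invoking Lemma~\ref{lem:unique-color} to conclude per-color connectivity inside each half. Your proof is repaired by a one-line change: for each color $c$, let $L$ (resp., $R$) consist of the edges between \emph{consecutive} points of $S[1,i]_c$ (resp., $S[i+1,n]_c$); such an edge cannot skip a point $p_r$ with $\alpha(\{p_a,p_b\})\subseteq\alpha(p_r)$, since that would force $c\in\alpha(p_r)$ and contradict consecutiveness. (Alternatively, apply Lemma~\ref{lem:unique-color} to your graph: subdividing a within-half edge keeps both pieces within that half, and the edges of $X_i^0$ already satisfy ($\star$) by the standing assumption, so the cut set is unchanged.) With that substitution, the rest of your argument, including the exactness of the cut set, is correct and matches the paper.
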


We encode the relevant information from $E_i^+$ using $k$ equivalence relations as follows. For every $c\in \{1, \ldots, k\}$, the components of  $(S[i+1,n]_c,(E_i^+)_c)$ define an equivalence relation on $(E_i^0)_c$, which we denote by $\pi_i^c$: two edges in $(E_i^0)_c$ are related iff they are incident to the same component of $(S[i+1,n]_c,(E_i^+)_c)$. Let $\Pi_i=(\pi_i^1,\ldots , \pi_i^k)$. The equivalence relation $\pi_i^c$, in turn, determines a graph $(S[1,i]_c, E(\pi_i^c))$: two distinct vertices in $S[1,i]_c$ are adjacent iff they are incident to equivalent edges in $(E_i^0)_c$ (that is, two distinct vertices in $S[1,i]_c$ are adjacent iff they both are adjacent to the same component of $(S[i+1,n]_c,(E_i^+)_c)$). See Fig.~\ref{fig:collinear} for examples of $E^0_i$ and $\Pi_i$.
The condition that $(S,X_i^-\cup E_i^0\cup E_i^+)$ is a CSG can now be formulated in terms of $E_i^0$ and $\Pi_i$
(without using $E_i^+$ directly).

\begin{lemma}\label{lem:halves}
	Let $G=(S,E)$ be a CSG, $i\in \{1,\ldots, n\}$, and $X_i^-$ an edge set on $S[1,i]$.
	The graph $(S,X_i^-\cup E_i^0\cup E_i^+)$ is a CSG iff
	the graph $(S[1,i]_c,(X_i^-)_c\cup E(\pi_i^c))$ is connected for every $c\in \{1\ldots ,k\}$.
\end{lemma}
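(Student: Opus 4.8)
The plan is to prove the statement color by color. Since a graph is a CSG exactly when its restriction to each color class is connected, it suffices to fix a primary color $c\in\{1,\ldots,k\}$ and show that $(S_c,\,(X_i^-)_c\cup (E_i^0)_c\cup (E_i^+)_c)$ is connected if and only if $(S[1,i]_c,\,(X_i^-)_c\cup E(\pi_i^c))$ is connected; conjoining this equivalence over all $c$ then gives the lemma. The guiding intuition is that $E(\pi_i^c)$ records exactly the connectivity that the right half donates to the left half: if we contract each component of $(S[i+1,n]_c,(E_i^+)_c)$ to a single point, then that component together with its incident crossing edges becomes a star on a subset of $S[1,i]_c$, and adding $E(\pi_i^c)$ replaces each such star by the clique on the same subset, which preserves connectivity among the left vertices.

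For the forward direction I would take any two vertices $p,q\in S[1,i]_c$ and a path $P$ between them in the connected graph $(S_c,(X_i^-)_c\cup(E_i^0)_c\cup(E_i^+)_c)$. Reading $P$ from $p$ to $q$, it alternates between maximal segments on $S[1,i]_c$ (using edges of $(X_i^-)_c$) and maximal excursions into $S[i+1,n]_c$ (using edges of $(E_i^+)_c$), linked by crossing edges of $(E_i^0)_c$. Each excursion stays inside a single component $K$ of $(S[i+1,n]_c,(E_i^+)_c)$, because the only way to leave the right side is through a crossing edge. Hence the entry edge $\{u,x\}$ and the exit edge $\{y,w\}$ of the excursion satisfy $x,y\in K$, so $u$ and $w$ are both incident to $K$ and therefore either coincide or are adjacent in $E(\pi_i^c)$. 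Replacing every excursion by this single $E(\pi_i^c)$ edge (or deleting it when $u=w$) turns $P$ into a walk from $p$ to $q$ in $(S[1,i]_c,(X_i^-)_c\cup E(\pi_i^c))$, proving that graph connected.

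For the reverse direction I would first record the sublemma that, when $S[1,i]_c\neq\emptyset$, every component $K$ of $(S[i+1,n]_c,(E_i^+)_c)$ is incident to an edge of $(E_i^0)_c$. This is where the hypothesis that $G$ is a CSG enters: $(S_c,E_c)$ is connected and contains a vertex of $S[1,i]_c$, so a path in $G$ from a vertex of $K$ to $S[1,i]_c$ must leave the right side; its first crossing edge emanates from $K$, and crossing edges lie in $(E_i^0)_c$, which $G$ and $G'$ share. Given this, assume $(S[1,i]_c,(X_i^-)_c\cup E(\pi_i^c))$ is connected. Every $E(\pi_i^c)$ edge $\{u,w\}$ can be realized in $(S_c,E'_c)$ as the walk $u\to x\to\cdots\to y\to w$, where $\{u,x\},\{y,w\}\in(E_i^0)_c$ and $x,y$ lie in a common component $K'$ joined by a path of $(E_i^+)_c$ edges; hence all of $S[1,i]_c$ is mutually connected in $(S_c,E'_c)$. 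Any right-side vertex lies in some component $K$, which by the sublemma has a crossing edge to an anchor $u_K\in S[1,i]_c$, and is thus connected to the left half as well; so $(S_c,E'_c)$ is connected. Finally I would dispatch the degenerate colors: if $S[i+1,n]_c=\emptyset$ then $(E_i^0)_c=(E_i^+)_c=E(\pi_i^c)=\emptyset$ and both sides reduce to connectivity of $(S[1,i]_c,(X_i^-)_c)$, while if $S[1,i]_c=\emptyset$ the right-hand condition is vacuous and the left-hand connectivity of $(S_c,E'_c)=(S[i+1,n]_c,(E_i^+)_c)$ is inherited directly from $G$ being a CSG, so the equivalence holds (true iff true) in either case.

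I expect the reverse direction to be the main obstacle, specifically the need for the sublemma that no right-side component is stranded. The relation $E(\pi_i^c)$ only captures components incident to \emph{two or more} left vertices and is entirely blind to components with zero crossing edges; without invoking that $G$ is a CSG (and that $G'$ reuses $G$'s edges $E_i^0$ and $E_i^+$), the right-hand condition could be satisfied while $G'$ leaves a right-side component disconnected, so the ``iff'' would fail. Keeping track of the empty-color edge cases, where the right-hand condition degenerates to a vacuous truth, is the other place where care is required.
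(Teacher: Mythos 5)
Your proof is correct and takes essentially the same route as the paper's: a per-color decomposition, separate handling of the degenerate cases where $S[1,i]_c$ or $S[i+1,n]_c$ is empty, the crucial use of $G$ being a CSG to guarantee that every component of $(S[i+1,n]_c,(E_i^+)_c)$ is anchored to $S[1,i]_c$ by a crossing edge, and the correspondence between right-side excursions and edges of $E(\pi_i^c)$. The paper merely compresses your two directions into a single chain of equivalences, while you spell out the path-rerouting explicitly.
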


We can now define subproblems for \textsc{Min-$k$CSG}. For an index $i\in \{1,\ldots , n\}$, a valid set $E_i^0$, and equivalence relations $\Pi_i=(\pi_i^1,\ldots ,\pi_i^k)$, let $\mathcal{X}(E_i^0,\Pi_i)$ be the family of edge sets $X_i^-$ on $S[1,i]$ such that for every $c\in \{1\ldots ,k\}$, the graph $(S[1,i]_c,(X_i^-)_c\cup E(\pi_i^c))$ is connected.
The subproblem \textbf{A}$[i, E_i^0, \Pi_i]$ is to find the minimum cost of an edge set $X_i^-\in \mathcal{X}(E_i^0,\Pi_i)$.

Note that for $i=n$, \textbf{A}$[n, \emptyset, (\emptyset, \ldots, \emptyset)]$ is the minimum cost of a CSG for an instance $(S,\alpha)$ of $\textsc{Min-$k$CSG}$. Next, we establish a recurrence relation for \textbf{A}$[i, E^0_i, \Pi_i]$, which will allow computing \textbf{A}$[n, \emptyset, (\emptyset, \ldots, \emptyset)]$ by dynamic programming. For $i=1$, we have \textbf{A}$[1, E_1^0, \Pi_1]=0$ for any valid $E_1^0$ and $\Pi_1$. For all $i$, $1<i\leq n$, we wish to express \textbf{A}$[i, E_i^0, \Pi_i]$ in terms of \textbf{A}$[i-1, E_{i-1}^0, \Pi_{i-1}]$'s for suitable $E_{i-1}^0$ and $\Pi_{i-1}$.

We say that two valid edge sets $E_{i-1}^0$ and $E_i^0$ are \emph{compatible} if there exists an $X_i^-\in \mathcal{X}(E_i^0,\Pi_i)$ for some $\Pi_i$ such that $E_{i-1}^0 = (X_i^-\cup E_i^0)_{i-1}^0$. We can characterize compatible edge sets as follows.
\begin{lemma}\label{lem:compatible}
	Two valid edge sets $E_{i-1}^0$ and $E_i^0$ are compatible iff
	every edge $e$ in the symmetric difference of $E_{i-1}^0$ and $E_i^0$ is incident to $p_i$.
\end{lemma}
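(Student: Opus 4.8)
The plan is to prove the equivalence by unwinding the two cut sets $E_{i-1}^0$ and $E_i^0$ color by color, exploiting the fact that property ($\star$) makes each of them determined by its set of colors (Corollary~\ref{cor:unique-edge}\eqref{cor:unique}). The engine of the argument is two observations about a fixed color $\gamma$. First, a $\gamma$-colored edge crossing the cut at $i$ has left endpoint equal to the largest index $a\le i$ with $\gamma\subseteq\alpha(p_a)$, so it is incident to $p_i$ exactly when $\gamma\subseteq\alpha(p_i)$; symmetrically a $\gamma$-colored edge crossing the cut at $i-1$ has right endpoint equal to the smallest index $b\ge i$ with $\gamma\subseteq\alpha(p_b)$, hence is incident to $p_i$ exactly when $\gamma\subseteq\alpha(p_i)$. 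Second, if $\gamma\not\subseteq\alpha(p_i)$ then ``largest $a\le i$'' coincides with ``largest $a\le i-1$'' and ``smallest $b\ge i$'' coincides with ``smallest $b\ge i+1$'', so the $\gamma$-edge crossing the cut at $i$ and the $\gamma$-edge crossing the cut at $i-1$ are the \emph{same} edge, and it avoids $p_i$. In other words, away from $p_i$ the two cuts see identical edges; this is what forces the symmetric difference to sit on $p_i$.

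First I would prove the forward direction. Assume $E_{i-1}^0=(X_i^-\cup E_i^0)_{i-1}^0$ for some edge set $X_i^-$ on $S[1,i]$, and suppose some $e\in E_{i-1}^0\triangle E_i^0$ is not incident to $p_i$; put $\gamma=\alpha(e)$, so $\gamma\not\subseteq\alpha(p_i)$. If $e\in E_i^0\setminus E_{i-1}^0$, then $e$ has left endpoint $\le i-1$ and right endpoint $\ge i+1$, so $e$ also crosses the cut at $i-1$ and therefore lies in $(X_i^-\cup E_i^0)_{i-1}^0=E_{i-1}^0$, a contradiction. If instead $e\in E_{i-1}^0\setminus E_i^0$, then $e\in(X_i^-\cup E_i^0)_{i-1}^0$ while $e\notin E_i^0$, so $e$ must come from $X_i^-$; since $X_i^-$ lies on $S[1,i]$ and $e$ crosses the cut at $i-1$, the right endpoint of $e$ is $p_i$, contradicting that $e$ avoids $p_i$. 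Hence every edge of the symmetric difference is incident to $p_i$.

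For the reverse direction I would argue constructively. Suppose every edge of $E_{i-1}^0\triangle E_i^0$ is incident to $p_i$, and let $B$ be the set of edges of $E_{i-1}^0$ not incident to $p_i$. A one-line check shows $B$ is also exactly the set of edges of $E_i^0$ not incident to $p_i$: an edge avoiding $p_i$ that lay in only one of $E_{i-1}^0,E_i^0$ would be a non-$p_i$ edge of the symmetric difference. Writing $A$ for the $p_i$-incident edges of $E_{i-1}^0$ (each runs from $p_i$ leftward into $S[1,i-1]$, hence is available on $S[1,i]$), I set $X_i^-:=A\cup Y$ with $Y$ a set of edges internal to $S[1,i-1]$ still to be chosen, and take $\Pi_i$ to be the coarsest relation in each color. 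The cut equality is then immediate: among the edges of $E_i^0$ exactly those with left endpoint $\le i-1$, namely $B$, cross the cut at $i-1$, and among $X_i^-$ exactly the $p_i$-incident edges $A$ do (the edges of $Y$ do not), so $(X_i^-\cup E_i^0)_{i-1}^0=A\cup B=E_{i-1}^0$.

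The main obstacle is the remaining membership $X_i^-\in\mathcal{X}(E_i^0,\Pi_i)$, i.e.\ that $Y$ and $\Pi_i$ can be chosen so that $(S[1,i]_c,(X_i^-)_c\cup E(\pi_i^c))$ is connected for each primary color $c$. Since interior edges of $S[1,i-1]$ do not touch the cut, I have full freedom there, and any two consecutive $c$-colored points can be joined by a $(\star)$-respecting $c$-edge, so the only real difficulty is attaching $p_i$ in each color $c$ with $c\in\alpha(p_i)$: I would split on whether $p_i$ is an endpoint of a $c$-colored edge of $E_i^0$ (in which case $E(\pi_i^c)$ already pins $p_i$ to the right side) and otherwise use the validity of $E_{i-1}^0$ together with the structural facts of the first paragraph to locate a $c$-colored edge of $A$ tying $p_i$ to $S[1,i-1]_c$. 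Pinning down this connectivity step is the delicate part of the argument, and Lemmas~\ref{lem:valid} and~\ref{lem:halves} are precisely the tools I would lean on to carry it out; the purely combinatorial characterization — that compatible cut sets differ only in edges at $p_i$ — is exactly what the recurrence for $\mathbf{A}[i,E_i^0,\Pi_i]$ needs to enumerate the admissible transitions.
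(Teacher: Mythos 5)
Your forward direction is correct and complete --- in fact it is more explicit than the paper's own argument, which merely records the two ($\star$)-derived facts about which cut edges can touch $p_i$; your two-case analysis (an edge of $E_i^0\setminus E_{i-1}^0$ avoiding $p_i$ would also cross the cut at $i-1$ and hence lie in $E_{i-1}^0$; an edge of $E_{i-1}^0\setminus E_i^0$ must come from $X_i^-$ and hence end at $p_i$) is exactly right. Your reverse direction also starts along the same route as the paper: the paper takes the witness to consist of all ($\star$)-edges on $S[1,i-1]$ together with the $p_i$-incident edges of $E_{i-1}^0$, chooses the coarsest $\Pi_i$, and checks the cut equality, just as you do with $X_i^-=A\cup Y$. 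But you then stop at ``the delicate part'': attaching $p_i$ to $S[1,i-1]_c$ for each $c\in\alpha(p_i)$. Announcing that Lemmas~\ref{lem:valid} and~\ref{lem:halves} are precisely the tools you would lean on is a plan, not a proof, and that step is the entire content of the reverse implication. This is a genuine gap.

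Moreover, the step cannot be carried out as you describe. Validity of $E_{i-1}^0$ only yields an edge $e$ with $c\in\alpha(e)$ crossing the cut at $i-1$; by Corollary~\ref{cor:unique-edge} that edge is incident to $p_i$ only if its \emph{entire} color set $\alpha(e)$ is contained in $\alpha(p_i)$, and otherwise it jumps over $p_i$. Concretely, take collinear points with $\alpha(p_1)=\{\texttt{r},\texttt{b}\}$, $\alpha(p_2)=\{\texttt{r}\}$, $\alpha(p_3)=\{\texttt{r},\texttt{b}\}$ and $i=2$, and let $E_1^0=E_2^0=\{\{p_1,p_3\}\}$; the edge satisfies ($\star$) since $\{\texttt{r},\texttt{b}\}\not\subseteq\alpha(p_2)$, and both sets are valid (they are the cuts of the CSGs $\{\{p_1,p_3\},\{p_2,p_3\}\}$ and $\{\{p_1,p_2\},\{p_1,p_3\}\}$, respectively). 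The symmetric difference is empty, so the lemma's condition holds vacuously, yet the two sets are \emph{not} compatible: any witness $X_2^-\in\mathcal{X}(E_2^0,\Pi_2)$ must connect $p_2$ to $p_1$ in red, and since $p_2$ is incident to no edge of $(E_2^0)_{\texttt{r}}$, the graph $E(\pi_2^{\texttt{r}})$ never touches $p_2$ for any choice of $\Pi_2$; hence $\{p_1,p_2\}\in X_2^-$, but then $\{p_1,p_2\}\in(X_2^-\cup E_2^0)_{1}^0\setminus E_1^0$, violating the cut equality. So in the ``otherwise'' branch of your case split there may simply be no $c$-colored edge of $A$ to find, and no choice of $Y$ or $\Pi_i$ can repair this, because every $p_i$-incident edge of any witness is forced into $E_{i-1}^0$. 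You are in good company: the paper's own proof asserts at exactly this point, citing Lemma~\ref{lem:valid}, that $p_i$ is incident to an edge of $(E_{i-1}^0)_c$ or $(E_i^0)_c$, and that assertion fails on the same example; a correct characterization needs an extra hypothesis in the spirit of condition (\textbf{d2}) of Lemma~\ref{lem:part}. In short: you have proved the forward implication, but the reverse implication is both unproven in your proposal and, along the lines you propose (and the paper follows), unprovable.
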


For two valid compatible edge sets, $E_{i-1}$ and $E_i$, and a sequence of equivalence relations $\Pi_i$, we define equivalence relations $\widehat{\Pi}_{i-1}=(\hat{\pi}_{i-1}^1,\ldots, \hat{\pi}_{i-1}^k)$ as follows. For every primary color $c\in \{1,\ldots, k\}$, let the equivalence relation $\hat{\pi}_{i-1}^c$ on $(E_{i-1}^0)_c$ be the transitive closure of the union of four equivalence relations: two edges in $(E_{i-1}^0)_c$ are related if (1) they both incident to $p_i$; (2) they both are in $(E_i^0)_c$ and $\pi_i^c$-equivalent; (3) they are both in $(E_i^0)_c$ and each are equivalent to some edge in $(E_i^0)_c$ that are $\pi_i^c$-equivalent; (4) one is incident to $p_i$ and the other is in $(E_i^0)_c$ and $\pi_i^c$-equivalent to some edge in $(E_i^0)_c$ incident to $\pi_i^c$.

\begin{lemma}\label{lem:part}
	Let $E_{i-1}^0$ and $E_i^0$ be two valid compatible edge sets, and $\Pi_i=(\pi_i^1,\ldots ,\pi_i^c)$.
	Let $E_{i-1}^-$ be a set of edges on $S[1,i-1]$, and put $E=E_{i-1}^-\cup E_{i-1}^0\cup E_i^0$.
	Then, $\widehat{\Pi}_{i-1}$ has the following property:
	$E_i^-\in \mathcal{X}(E_i^0,\Pi_i)$ if and only if
	\begin{enumerate}
		\item[(\textbf{d1})] $E_{i-1}^-\in \mathcal{X}(E_{i-1}^0,\widehat{\Pi}_{i-1})$; and
		\item[(\textbf{d2})] if $c\in \alpha(p_i)$ and $S[1,i]_c\neq \{p_i\}$, then $p_i$ is incident to
		an edge in $(E_{i-1}^0)_c$ or an edge in $(E_i^0)_c$ that is $\pi_i^c$-equivalent to some edge incident to $S[1,i-1]_c$.
	\end{enumerate}
\end{lemma}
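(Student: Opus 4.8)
The plan is to prove the equivalence one primary color at a time. By definition, $E_i^-\in\mathcal{X}(E_i^0,\Pi_i)$ means that the graph $H_c:=(S[1,i]_c,(E_i^-)_c\cup E(\pi_i^c))$ is connected for every $c$, and condition (\textbf{d1}) means that $H_c':=(S[1,i-1]_c,(E_{i-1}^-)_c\cup E(\hat\pi_{i-1}^c))$ is connected for every $c$, while (\textbf{d2}) is a statement $D_c$ about a single color $c\in\alpha(p_i)$ (and is vacuous for $c\notin\alpha(p_i)$). Hence it suffices to prove, for each fixed $c$, that $H_c$ is connected iff $H_c'$ is connected and $D_c$ holds. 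By Corollary~\ref{cor:unique-edge} and Lemma~\ref{lem:compatible}, the set $F:=\{e\in E_{i-1}^0:p_i\in e\}$ consists exactly of the edges of $E_{i-1}^0$ incident to $p_i$, and $E_i^-=E_{i-1}^-\cup F$; thus $H_c$ has vertex set $S[1,i-1]_c\cup\{p_i\}$ when $c\in\alpha(p_i)$, while $H_c'$ has vertex set $S[1,i-1]_c$. Two cases are immediate: if $c\notin\alpha(p_i)$ then $p_i$ contributes no color-$c$ vertex or edge, the relations $\pi_i^c$ and $\hat\pi_{i-1}^c$ coincide, so $H_c=H_c'$ and $D_c$ is vacuous; and if $S[1,i]_c=\{p_i\}$ then $H_c$ is a single vertex, $H_c'$ is empty, and $D_c$ is vacuous, so both sides hold. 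It remains to treat the case $c\in\alpha(p_i)$ with $|S[1,i]_c|\ge 2$.

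The heart of the argument is a structural claim that $H_c'$ is exactly $H_c$ with the vertex $p_i$ ``spliced out''. Concretely I would prove: (i) for $u,v\in S[1,i-1]_c$, the vertices $u$ and $v$ lie in the same component of $H_c$ iff they lie in the same component of $H_c'$; and (ii) $p_i$ lies in the same $H_c$-component as some vertex of $S[1,i-1]_c$ iff $D_c$ holds. Given (i) and (ii), the desired equivalence follows at once: if $H_c$ is connected then (i) makes $H_c'$ connected and (ii) yields $D_c$; conversely, if $H_c'$ is connected then (i) places all of $S[1,i-1]_c$ in one $H_c$-component, and $D_c$ together with (ii) attaches $p_i$ to it, so $H_c$ is connected.

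For (ii), note that the only $H_c$-edges incident to $p_i$ are the edges of $(F)_c$ (which join $p_i$ to a vertex of $S[1,i-1]_c$) and the edges of $E(\pi_i^c)$ incident to $p_i$ (which join $p_i$ to a vertex $p_j$, $j<i$, whose crossing edge is $\pi_i^c$-equivalent to a color-$c$ crossing edge of $p_i$). Since every $H_c$-neighbor of $p_i$ already lies in $S[1,i-1]_c$, ``$p_i$ reaches $S[1,i-1]_c$ in $H_c$'' is equivalent to ``$p_i$ has an $H_c$-neighbor'', which is precisely $D_c$. The real work is in (i), and this is where the four relations defining $\hat\pi_{i-1}^c$ (closed under transitivity) do their job: each generator corresponds to a way in which two vertices of $S[1,i-1]_c$ become connected in $H_c$ through $p_i$ or through a right component that $\pi_i^c$ merges---relation (1) to a path through two $F$-edges at $p_i$, relation (2) to a single $E(\pi_i^c)$-edge between two ``through'' crossing edges, and relations (3)--(4) to paths of the form $p_j - p_i - p_{j'}$ that mix an $F$-edge with an $E(\pi_i^c)$-edge. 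I would prove (i) by lifting paths in both directions: every edge of $H_c'$ (either an $(E_{i-1}^-)_c$-edge, which is also in $H_c$, or an $E(\hat\pi_{i-1}^c)$-edge) is replaced by the corresponding $H_c$-walk indicated above; and conversely, every maximal sub-walk $p_a - p_i - p_b$ of an $H_c$-path through $p_i$ is short-circuited by a single $E(\hat\pi_{i-1}^c)$-edge $\{p_a,p_b\}$, which exists because the two $p_i$-incident edges certifying $p_a$–$p_i$ and $p_b$–$p_i$ fall under one of the generators (1), (3), or (4).

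I expect the main obstacle to be exactly this verification that the transitive closure of the four generators realizes ``connectivity through $p_i$'' faithfully---neither creating spurious $\hat\pi_{i-1}^c$-identifications (soundness, needed for the direction of (i) that lifts an $H_c$-path to an $H_c'$-path) nor missing any (completeness, needed for the converse). Carefully distinguishing, for each crossing edge of $(E_{i-1}^0)_c$, whether its right endpoint is $p_i$ (an $F$-edge) or lies in $S[i+1,n]$ (a ``through'' edge that also belongs to $(E_i^0)_c$), and tracking how the color-$c$ crossing edges of $p_i$ merge right components under $\pi_i^c$, is the bookkeeping that makes the generator-by-generator correspondence precise.
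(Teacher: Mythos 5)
Your proposal is correct and takes essentially the same route as the paper's proof: both directions are established by translating paths between $(S[1,i]_c,(E_i^-)_c\cup E(\pi_i^c))$ and $(S[1,i-1]_c,(E_{i-1}^-)_c\cup E(\hat{\pi}_{i-1}^c))$ --- short-circuiting subwalks through $p_i$ by single $E(\hat{\pi}_{i-1}^c)$-edges in one direction, and expanding each $E(\hat{\pi}_{i-1}^c)$-edge into a walk through $p_i$ or an $E(\pi_i^c)$-edge in the other --- with (\textbf{d2}) accounting for attaching $p_i$ itself. If anything, your explicit decomposition into claims (i) and (ii) and the generator-by-generator correspondence is more detailed than the paper's own argument, which leaves that verification implicit.
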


\begin{figure}[htp]
	\centering
	\def\svgwidth{0.9\columnwidth}
	%\graphicspath{{../images/}}
	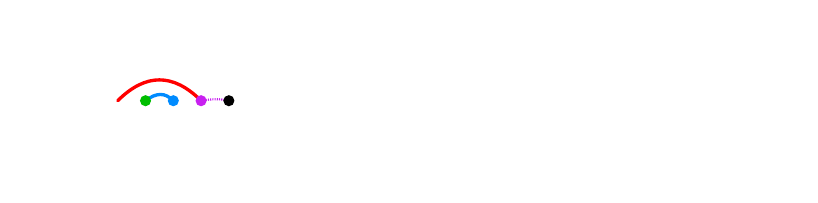
	\caption{(a) $E^0_i$ and $\pi_i^{blue}$. (b) $E^0_{i+1}$ and $\pi_{i+1}^{blue}$, where $E^0_{i+1}$ and $E^0_i$ are compatible.
 %Edges in $E^-_i$ and $E^-_{i+1}$ are shown faded.
 %Edges are in fact line segments and are shown distorted for clarity.
 (c) $E^0_{i+1}$ and $\pi_{i+1}^{blue}$ violate condition (\textbf{d2}).}
	\label{fig:collinear}
\end{figure}

\begin{lemma} \label {lem:colcur}
	For all $i\in \{2,\ldots , n\}$, we have the following recurrence:
	\begin{equation}
		\label{eq:DynamicProg}
		\textbf{A}[i, E_i^0, \Pi_i]=\sum_{\{p_h,p_i\}\in E^0_i}w(\{p_h,p_i\})
		+\min_{E_{i-1}^0\text{\rm compatible}} \textbf{A}[i-1, E_{i-1}^0, \widehat{\Pi}_{i-1}] .
	\end{equation}
\end{lemma}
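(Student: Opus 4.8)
The plan is to prove the identity directly, as a single equality for each fixed $i$, by setting up a cost-faithful correspondence between the feasible solutions counted by $\textbf{A}[i,E_i^0,\Pi_i]$ and the feasible solutions of the level-$(i-1)$ subproblems appearing on the right-hand side. The engine is Lemma~\ref{lem:part}. Fix a feasible $X_i^-\in\mathcal{X}(E_i^0,\Pi_i)$. Since every edge of $X_i^-$ lies in $S[1,i]$, I split it as $X_i^-=E_{i-1}^-\cup F_i$, where $E_{i-1}^-$ collects the edges with both endpoints in $S[1,i-1]$ and $F_i$ collects the edges incident to $p_i$; these are edge-disjoint, so $w(X_i^-)=w(E_{i-1}^-)+w(F_i)$. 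Setting $E_{i-1}^0:=(X_i^-\cup E_i^0)_{i-1}^0$, the edges of $F_i$ are exactly the edges of $E_{i-1}^0$ incident to $p_i$, and Lemma~\ref{lem:compatible} certifies that $E_{i-1}^0$ is compatible with $E_i^0$ because their symmetric difference consists only of edges at $p_i$. Lemma~\ref{lem:part} then states that $X_i^-\in\mathcal{X}(E_i^0,\Pi_i)$ is equivalent to the conjunction of (d1) $E_{i-1}^-\in\mathcal{X}(E_{i-1}^0,\widehat{\Pi}_{i-1})$ and the local condition (d2), where $\widehat{\Pi}_{i-1}$ is the sequence of equivalence relations determined by $E_{i-1}^0$, $E_i^0$, and $\Pi_i$.

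For the inequality $\textbf{A}[i,E_i^0,\Pi_i]\ge(\text{RHS})$, I start from an optimal $X_i^-$, decompose it as above, and read off the compatible $E_{i-1}^0$ and the induced $\widehat{\Pi}_{i-1}$. By Lemma~\ref{lem:part} the part $E_{i-1}^-$ is feasible for $\textbf{A}[i-1,E_{i-1}^0,\widehat{\Pi}_{i-1}]$, so $w(E_{i-1}^-)\ge\textbf{A}[i-1,E_{i-1}^0,\widehat{\Pi}_{i-1}]$; adding the weight of the boundary edges bounds $w(X_i^-)$ below by the matching term of the minimum. For the reverse inequality I pick the compatible $E_{i-1}^0$ attaining the minimum together with an optimal $E_{i-1}^-$, reattach the edges of $E_{i-1}^0$ incident to $p_i$ to obtain a candidate $X_i^-$ (this is precisely the $E_i^-$ of Lemma~\ref{lem:part}), and invoke the other direction of the lemma to conclude that $X_i^-$ is feasible at level $i$; its weight realizes the right-hand side. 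Taking minima over all feasible $X_i^-$ on one side and over all compatible $E_{i-1}^0$ on the other yields the equality.

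Two points need care, and I would argue them explicitly. First, condition (d2) is a property of $E_{i-1}^0$ (given $E_i^0$ and $\Pi_i$), not of $E_{i-1}^-$, so the minimization over compatible $E_{i-1}^0$ must be read as ranging only over those for which (d2) holds (equivalently, set $\textbf{A}[i-1,\cdot,\cdot]=+\infty$ on configurations that leave $p_i$ disconnected in some color). This restriction is automatic in the lower-bound direction, since a feasible $X_i^-$ forces (d2) via Lemma~\ref{lem:part}, and it is exactly what legitimizes reattaching $F_i$ in the upper-bound direction. Second, and this is the main obstacle, the cost bookkeeping must be pinned down: as the cut moves from $i-1$ to $i$, the only edges changing side are those incident to $p_i$ (Lemma~\ref{lem:compatible}), and by Corollary~\ref{cor:unique-edge} the boundary edges of each color are uniquely determined, so each edge of the graph is charged to exactly one step of the recurrence, namely the step indexed by its left endpoint, where the recorded increment is precisely $\sum_{\{p_h,p_i\}\in E_i^0}w(\{p_h,p_i\})$. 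Verifying that this charging is globally consistent with the definition of $\textbf{A}$ — that every edge is counted once, with the telescoped total at $i=n$ equal to the full CSG cost since no boundary edges remain — is the delicate part, and it rests entirely on the uniqueness statement of Corollary~\ref{cor:unique-edge} and the symmetric-difference description in Lemma~\ref{lem:compatible}.
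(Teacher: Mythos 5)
Your skeleton is the paper's skeleton: the paper proves Lemma~\ref{lem:colcur} in a single sentence citing exactly Lemmata~\ref{lem:compatible} and~\ref{lem:part}, and your two-directional argument via the decomposition $X_i^-=E_{i-1}^-\cup F_i$ (with $F_i$ the edges of $X_i^-$ incident to $p_i$) is the honest expansion of that sentence. Your first point of care is also a genuine catch that the paper glosses over: condition (\textbf{d2}) depends on the transition data $(E_{i-1}^0,E_i^0,\Pi_i)$ and not on $E_{i-1}^-$, so the minimum must range only over compatible $E_{i-1}^0$ that satisfy (\textbf{d2}); dropping this restriction can only lower the right-hand side below the true value.

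The gap is in the cost term, and the charging argument you defer to Corollary~\ref{cor:unique-edge} cannot close it. Your own decomposition gives $w(X_i^-)=w(E_{i-1}^-)+w(F_i)$, where $F_i=\{e\in E_{i-1}^0: p_i\in e\}$ is the set of cut edges at $i-1$ whose \emph{right} endpoint is $p_i$. This increment depends on $E_{i-1}^0$, so what your argument actually establishes is
\[
\textbf{A}[i, E_i^0, \Pi_i]=\min_{E_{i-1}^0}\Bigl(\sum_{\{p_h,p_i\}\in E_{i-1}^0}w(\{p_h,p_i\})+\textbf{A}[i-1, E_{i-1}^0, \widehat{\Pi}_{i-1}]\Bigr),
\]
the minimum over compatible $E_{i-1}^0$ satisfying (\textbf{d2}). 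The printed recurrence instead charges $\sum_{\{p_h,p_i\}\in E_i^0}w(\{p_h,p_i\})$, the cut edges at $i$ whose \emph{left} endpoint is $p_i$, outside the minimum. These are different edge sets with different weights, and the "charge each edge at its left endpoint" scheme you invoke is inconsistent with the paper's definition $\textbf{A}[i,E_i^0,\Pi_i]=\min\{w(X_i^-):X_i^-\in\mathcal{X}(E_i^0,\Pi_i)\}$: that quantity contains no cut-edge weight, so under left-endpoint charging the two sides disagree at intermediate indices, and edges with left endpoint $p_1$ are never charged at all since the base case is $\textbf{A}[1,\cdot,\cdot]=0$. A concrete failure: take $n=2$, one color, both points colored. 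Then $\textbf{A}[2,\emptyset,(\emptyset)]=w(\{p_1,p_2\})$, but the printed right-hand side is $0+\min_{E_1^0}\textbf{A}[1,E_1^0,\widehat{\Pi}_1]=0$. So the telescoping claim in your last paragraph is not merely delicate, it is false as stated; the recurrence is only correct in the right-endpoint form above (or, equivalently, in the printed form after redefining $\textbf{A}$ to include $w(E_i^0)$ and setting the base case to $\textbf{A}[1,E_1^0,\Pi_1]=w(E_1^0)$). Either repaired version still supports the $O(n)$ dynamic program of Theorem~\ref{thm:collinear-alg}, but a correct proof must commit to one bookkeeping convention and carry it through; your proposal proves one convention and then asserts, without justification, that it matches the other.
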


\begin{theorem}\label{thm:collinear-alg}
	For every constant $k\geq 1$, \textsc{Min-$k$CSG} can be solved in $O(n)$ time when the input points are collinear.
\end{theorem}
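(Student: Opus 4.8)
The plan is to prove the theorem as a running-time analysis of the dynamic program suggested by the recurrence in Lemma~\ref{lem:colcur}. Correctness is already delivered by the chain of structural results: Lemma~\ref{lem:opt-subproblem} gives optimal substructure, Lemma~\ref{lem:halves} reformulates the CSG condition purely in terms of $E_i^0$ and $\Pi_i$, Lemma~\ref{lem:part} supplies the transition conditions (\textbf{d1}) and (\textbf{d2}), and Lemma~\ref{lem:colcur} assembles these into the recurrence~\eqref{eq:DynamicProg}. So the remaining task is to bound the number of subproblems and the work per subproblem, and to show both are $O(1)$ once $k$ is treated as a constant.

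First I would bound the state space. A state is a triple $(i, E_i^0, \Pi_i)$. By Corollary~\ref{cor:unique-edge}\eqref{cor:unique}, for each color $\gamma\in 2^{\{1,\ldots,k\}}\setminus\emptyset$ there is at most one edge of $E_i^0$ of color $\gamma$, and its endpoints are determined by $i$ and $\gamma$; hence $E_i^0$ is encoded by its color set $\Gamma_i\subseteq 2^{\{1,\ldots,k\}}\setminus\emptyset$, giving at most $2^{2^k-1}$ possibilities. For each primary color $c$, the set $(E_i^0)_c$ of edges whose color contains $c$ has size at most $2^{k-1}$ (the number of colors containing $c$), so $\pi_i^c$ is one of at most $B_{2^{k-1}}\le 2^{(k-1)2^{k-1}}$ equivalence relations, where $B_t\le t^t$ is the Bell number. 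Multiplying over the $k$ colors bounds the number of tuples $\Pi_i$ by $2^{k(k-1)2^{k-1}}$. Thus the number of states with a fixed index $i$ is $2^{O(k^2 2^k)}$, and the total number of states is $n\cdot 2^{O(k^2 2^k)}$.

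Next I would bound the work per state. Evaluating~\eqref{eq:DynamicProg} for a fixed state requires computing the weight term $\sum_{\{p_h,p_i\}\in E_i^0} w(\{p_h,p_i\})$, which touches only the at most $2^k$ edges of $E_i^0$ having $p_i$ as left endpoint, and then minimizing over the sets $E_{i-1}^0$ compatible with $E_i^0$. By Lemma~\ref{lem:compatible}, compatibility constrains only edges incident to $p_i$, so the compatible sets are again at most $2^{2^k-1}$ in number; for each I compute $\widehat{\Pi}_{i-1}$ from $(E_{i-1}^0, E_i^0, \Pi_i)$ as the transitive closure of the four explicitly described relations on ground sets of size $\le 2^{k-1}$, verify condition (\textbf{d2}) (treating the entry as $+\infty$ when no compatible choice satisfies it), and look up the already-computed value $\textbf{A}[i-1, E_{i-1}^0, \widehat{\Pi}_{i-1}]$. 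Every one of these operations is a function of $k$ alone, so the per-state work is $2^{O(k^2 2^k)}$.

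Finally I would account for precomputation and conclude. Because the points are given sorted along the $x$-axis, for each color $\gamma$ and index $i$ the unique endpoints of the color-$\gamma$ edge of $E_i^0$ (Corollary~\ref{cor:unique-edge}\eqref{cor:unique}) are obtained by prefix/suffix scans, and validity (Lemma~\ref{lem:valid}) is tested from prefix/suffix color counts; this costs $O(2^k\cdot n)$ overall. Evaluating the table in increasing order of $i$, from the base case $\textbf{A}[1,\cdot,\cdot]=0$ up to the answer $\textbf{A}[n, \emptyset, (\emptyset,\ldots,\emptyset)]$, then takes $n\cdot 2^{O(k^2 2^k)}$ time, which is $O(n)$ for constant $k$. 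I expect the only delicate point to be confirming that the state count depends on $n$ only through the index $i$ and not through any hidden size of $\Pi_i$; this is precisely what Corollary~\ref{cor:unique-edge} secures, since it forces each $(E_i^0)_c$, and hence each $\pi_i^c$, to be a structure whose size is governed by $k$ alone.
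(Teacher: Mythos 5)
Your proposal is correct and follows essentially the same route as the paper's proof: bound the number of states $(i,E_i^0,\Pi_i)$ via Corollary~\ref{cor:unique-edge} (at most $2^k-1$ edges in $E_i^0$, at most $2^{k-1}$ edges per primary color) and Bell numbers for the $\pi_i^c$, then observe that evaluating the recurrence~\eqref{eq:DynamicProg} costs a function of $k$ alone per state, giving $n\cdot 2^{O(k^2 2^k)}=O(n)$ time for constant $k$. Your accounting is in fact slightly more explicit than the paper's (you count the possible $E_i^0$ as $2^{2^k-1}$ rather than the paper's looser $2^k$ factor, and you spell out the prefix/suffix precomputation and the $+\infty$ handling of transitions violating (\textbf{d2})), but these refinements do not change the approach or the bound.
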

\begin{proof}
	We determine the number of subproblems. By Corollary~\ref{cor:unique-edge}, every valid
	$E_i^0$ contains at most $|2^{\{1,\ldots,k\}}\setminus\{\emptyset\}|=2^k-1$ edges.
	We have $|(E_i^0)_c|\leq 2^{k-1}$, since $2^{k-1}$ different colors contain any primary color $c\in\{1,\ldots,k\}$.
	The number of equivalence relations of a set of size $t$ is known as the $t$-th \emph{Bell number}, denoted $B(t)$.
	It is known~\cite{BT10} that $B(t)\leq (0.792t/\ln(t+1))^t<2^{O(t \log t)}$.
	Consequently, the number of possible $\Pi_i$ is at most $(B(2^{k-1}))^k$.
	The total number of subproblems is $O(n 2^k (B(2^{k-1}))^k)$, which is $O(n)$ for any constant $k$.
	We solve the subproblems  \textbf{A}$[i, E_i^0, \Pi_i]$, $1<i\leq n$, by dynamic programming,
	using the recursive formula \eqref{eq:DynamicProg}. The time required to evaluate~\eqref{eq:DynamicProg}
	is $O(2^k)$ for the sum of edge weights and $O(2^k (B(2^{k-1}))^k)$ to compare all compatible subproblems
	\textbf{A}$[i-1, E_{i-1}^0, \widehat{\Pi}_{i-1}]$, that is, $O(1)$ time when $k$ is a constant.
	Therefore, the dynamic programming can be implemented in $O(n)$ time.
\end{proof}

\section{Conclusions}

We have shown that \textsc{Min-3CSG} is NP-complete in general and given a $O(n)$ time algorithm for \textsc{Min-$k$CSG} in the special case that all points are collinear and $k$ is a constant.
We also improved the approximation factor of a polynomial time algorithm from $(2+\frac{1}{2}\varrho)$~\cite{HKK16} to $(2-\frac{2}{2+2\varrho})$ when $k=3$. It remains open whether there exists a PTAS for \textsc{Min-$k$CSG}, $k\ge 3$.
Several other special cases are open for \textsc{Min-3CSG}, such as when the points in $S$ are on a circle or in convex position.
We can generalize \textsc{Min-$k$CSG} so that the edge weights need not be Euclidean distances. Given an arbitrary graph $(V,E)$ and a coloring $\alpha:V\rightarrow \mathcal{P}(\{1,\ldots,k\})$, what is the minimum set $E'\subseteq E$ such that $(V,E')$ is a colored spanning graph?
Since the 2-approximation algorithm presented here did not rely on the geometry of the problem, it extends to the generalization; however, this problem may be harder to approximate than its Euclidean counterpart.

%\newpage

\smallskip\noindent{\bf Acknowledgements.}
Research on this paper was supported in part by the NSF awards CCF-1422311 and CCF-1423615. Akitaya was supported by the Science Without Borders program. L{\"o}ffler was partially supported by the Netherlands Organisation for Scientific Research (NWO) projects 639.021.123 and 614.001.504.

\newpage

\appendix

\section{Omitted proofs} \label {app:proofs}

% PROOFS FROM HARDNESS
\subsection {Proofs from Section~\ref{sec:hard}}

%	\begin{repeatlemma}{lem:np}
%		\textsc{Min-$k$CSG} is in NP.
%	\end{repeatlemma}
%	\begin{proof}
%Given a set of edges $E$, we can verify if $(S,E)$ is a $CSG(S,\alpha)$ in $O(k|S|)$ time by testing connectivity in $(S_c,E_c)$ for each primary color $c\in\{1,\ldots,k\}$ and checking whether $\sum_{e\in E}w(e)\le W$ in $O(k|E|)$ time.
%	\end{proof}

	\begin{figure}[htp]
		\centering
		\def\svgwidth{0.8\columnwidth}
		%\graphicspath{{../images/}}
		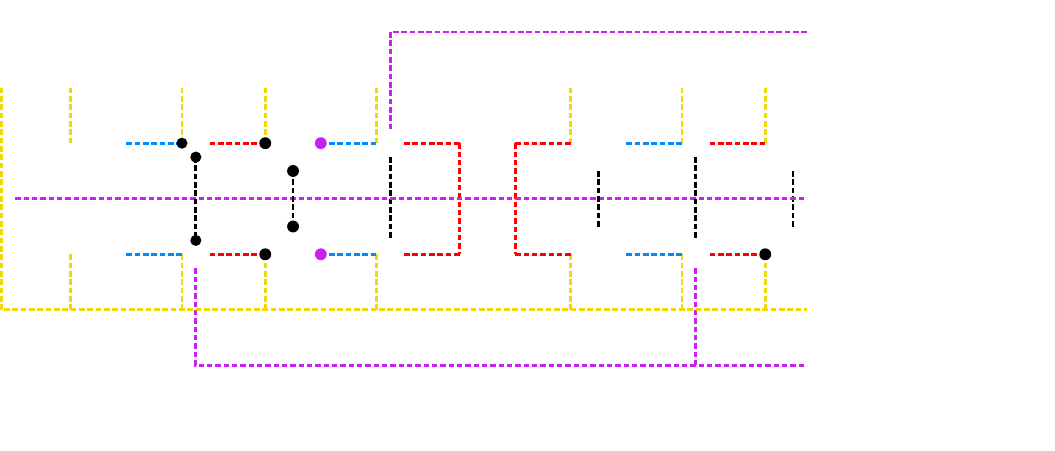
		\caption{(a) Reduction from \textsc{Planar-Monotone-3SAT} to \textsc{Min-3CSG}. Dashed lines correspond to series of $\eps$ close points as shown in (b).}
		\label{fig:reduction}
		\centering
	\end{figure}	
	
	\begin{repeatlemma}{lem:sat2csg}
		Let $A$ be a positive instance of \textsc{Planar-Monotone-3SAT}. Then $f(A)$ is a positive instance of \textsc{Min-3CSG}.
	\end{repeatlemma}
	
	\begin{proof}
		We construct a standard solution for $f(A)$ based on the solution for $A$.
		First connect all points along each dashed line using $\sum_{e\in E'}w(e))+39r\eps$ length.
		If a variable is set to \texttt{true} in $A$, connect the active points of all 2-switches of the corresponding variable gadget as shown in Fig.~\ref{fig:switch}(a). Otherwise connect them as the reflection of Fig.~\ref{fig:switch}(a) about the $x$-axis.
		The length of these edges sum to $2r(1+\sqrt{2})$.
		For each positive (resp., negative) clause, choose an arbitrary neighbor variable assigned \texttt{true}  (resp., \texttt{false}) and connect the active points in the $2\delta$-switch as is shown in Fig.~\ref{fig:switch}(d) (resp., reflection of Fig.~\ref{fig:switch}(d)).
		The length of such edges sum to $4m\delta\sqrt{2}$.
		Connect all remaining $2\delta$-switches as shown in Figs.~\ref{fig:switch}(c) (or its reflection) depending on the assignment of the neighbor switch.
		The length of such edges sum to $2(r-m)\delta(1+\sqrt{2})$.
		The resulting graph is a colored spanning graph and its total weight is $W$.
	\end{proof}

\begin{figure}[h]
	\centering	
	\includegraphics[width=0.75\linewidth]{switch.pdf}
	%	\def\svgwidth{0.7\columnwidth}
	%	\graphicspath{{../images/}}
	%	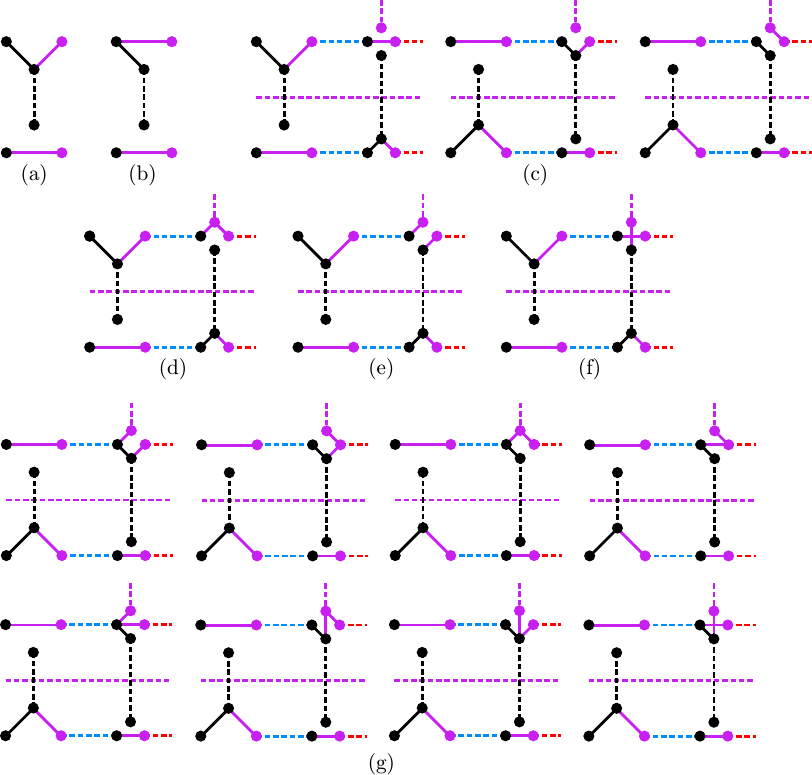
	\caption{All possible ways to connect the vertices of a switch in a standard solution satisfying all constraints with the minimum number of edges: (a) and (b) a pair of 2-switches with three edges; (c) a pair of $2\delta$-switches with three edges; (d), (e), (f), and (g) a pair of $2\delta$-switches with four edges such that there is a blue path from the clause gadget to the spine that uses an edge from the switches.}
	\label{fig:switch}
\end{figure}

	\begin{repeatlemma}{lem:existance}
		If $f(A)$ is a positive instance of \textsc{Min-3CSG}, there exists a standard solution for this instance.
	\end{repeatlemma}
	
	\begin{proof}
		By Lemma~\ref{lem:monocrhomatic}, if $f(A)$ is a positive instance of \textsc{Min-3CSG}, there exists a solution containing $E'$.
		We consider only such solutions.
		Suppose that there exists an edge
		$e=\{q_1,q_2\}$ such that $w(\{q_1,q_2\})>\eps$,  and  $q_1$  or $q_2$ is not an active point.
		Since only active points are multi-chromatic, $e$ is necessarily monochromatic, therefore, its removal can only affect the connectivity of the color $c$, where $\{c\}=\alpha(\{q_1,q_2\})$, disconnecting the induced graph into two connected components.
		Since this solution includes $E'$, by property (I) there exists an edge of length $\eps$ that reconnects the components or there exists an edge $e'=\{p_1,p_2\}$ such that $w(e')<w(e)$.
		In both cases, $e$ can be replaced by such an edge, obtaining a lighter solution.
		It remains to consider the case that $q_1$ and $q_2$ are active points, but in different switches.
		In that case, $w(\{q_1,q_2\})>6$ by construction. The deletion of $\{q_1,q_2\}$ may disconnect
        up to three graphs induced by primary colors.
		For each primary color $c$, the two components are either $\eps$ apart or there exists a switch that contains active points belonging to both components, by property (I).
		Hence, replacing $\{q_1,q_2\}$ by at most three edges, each of length at most 2, produces a solution of equal or smaller cost.
	\end{proof}

\begin{figure}
	\centering
	\includegraphics[width=0.75\linewidth]{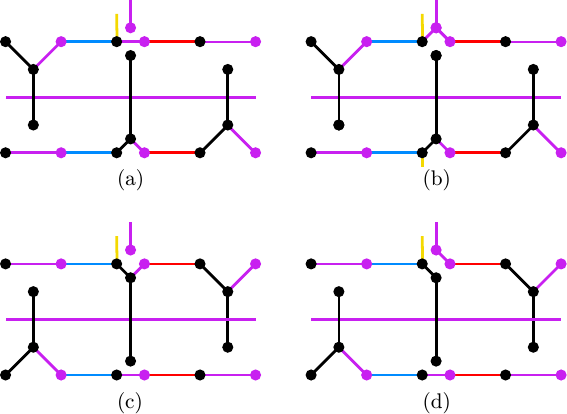}
	\caption{Counter-examples that show that two consecutive 2-pairs of the same variable gadget cannot have different orientations.}
	\label{fig:conter-expamples}
\end{figure}

	\begin{repeatlemma}{lem:min}
		In a standard solution, the minimum length required to satisfy the local constraints of a $2$-pair (resp., $2\delta$-pair) is $2(1+\sqrt{2})$ (resp., $2\delta(1+\sqrt{2})$).
	\end{repeatlemma}
	
	\begin{proof}
		To satisfy the rib constraint, assume without loss of generality that the upper switch contain a black edge.
		Since the switch constraints of the pair still require at least two more edges, the solution must have at least three edges.
		We can enumerate all possible local solutions that satisfy the switch constraints using a total of three edges (Fig.~\ref{fig:switch}(a) and (b)).
		Notice that every edge in the switch is $2$ or $\sqrt{2}$ long ($2\delta$ or $\sqrt{2}\delta$ for $2\delta$-pairs).
		Therefore, all possible solutions that use four edges have a greater cost than the ones using three edges.
		Then, the local solution with minimum cost must be as shown in Fig.~\ref{fig:switch}(a).
		Notice that this lower bound continues to hold in the presence of an active point of a clause.
        However, if an active point of a clause is part of a switch, the local solution with minimum
        cost is no longer unique; Fig.~\ref{fig:switch}(c) shows all solutions in that case.
	\end{proof}

	\begin{repeatcorollary}{cor:2-pair}
		In a standard solution, every 2-pair is connected minimally.
	\end{repeatcorollary}
	
	\begin{proof}
		For contradiction, suppose that there exists a 2-pair whose local cost is at least $4+\sqrt{2}$ (the second minimum is depicted in Fig.~\ref{fig:switch}(b)).
		Then, by Lemma~\ref{lem:min} the cost of such a solution must be at least $\sum_{e\in E'}w(e)+(r-1)(2(1+\sqrt{2}))+4\sqrt{2}=W+2\sqrt{2}-[2+39r\eps+2\delta(r(1+\sqrt{2})+m(\sqrt{2}-1))]$ which is greater than $W$ because $2m\le r\le 3m$ and by the choice of $\eps$ and $\delta$.
	\end{proof}

	\begin{repeatlemma}{lem:clause}
		In a standard solution, for each clause gadget, there exists a $2\delta$-pair with local cost at least $4\delta\sqrt{2}$.
	\end{repeatlemma}
	
	\begin{proof}
		Each clause gadget requires at least a blue path between one of its active points and the spine, or else the subgraph induced by $S_\texttt{b}$ is disconnected.
		Consider the $2\delta$-pair that contains an edge in such a path.
		We can enumerate all local solutions that satisfy the local constraints and contain a blue path between the clause active point and the spine and that uses 4 edges in the $2\delta$-pair (Fig.~\ref{fig:switch}(d)--(g)).
		By Corollary~\ref{cor:2-pair}, we consider only solutions in which $2$-pairs are minimally connected.
		Notice that removing any of the edges violates some local constraint, hence, at least 4 edges are required.
		Also notice that every local solution that uses 5 or more edges have cost $5\delta\sqrt{2}$ or greater.
		The local solutions with minimum cost are shown in Figs.~\ref{fig:switch}(d) and (e) and their cost is $4\delta\sqrt{2}$.
		
	\end{proof}
	
	\begin{repeatcorollary}{cor:clause}
		In a standard solution, for each clause gadget, there exists a $2\delta$-pair connected as Fig.~\ref{fig:switch}(d). All other $2\delta$-pairs are connected minimally as shown in Fig.~\ref{fig:switch}(c).
%\MAARTEN {In this version, I kept the original figure references to Fig.~\ref {fig:switch} (so this is the only repeated statement that is not exactly the same in the main text and in the appendix.}
	\end{repeatcorollary}
	
	\begin{proof}
		By Lemma~\ref{lem:clause}, the lower bound for the cost of $m$ of the $2\delta$-pairs is $4m\delta\sqrt{2}$.
		For contradiction, suppose that one of the  $2\delta$-pairs uses more length than its lower bound.
		The second minimal configuration on the switches occurs when $m$ distinct $2\delta$-pairs are connected using $4m\delta\sqrt{2}$ length, $r-m-1$ of the remaining $2\delta$-pairs are connected using $(r-m-1)\delta(2+2\sqrt{2})$, and the remaining $2\delta$ pair uses $4\delta\sqrt{2}$. 		
		This construction costs at least $\sum_{e\in E'}w(e)+r(2+2\sqrt(2))+(r-m)\delta(2+4\sqrt{2})+2\sqrt{2}\delta-2\delta=W+2\sqrt{2}\delta-(39r\eps+2\delta)$ which is greater than $W$ by the choice of $\eps$ and $\delta$.
		Hence, $r-m$ of the $2\delta$-pairs have to be connected as shown in Fig.~\ref{fig:switch}(c) and $m$ of them have to be connected as in Fig.~\ref{fig:switch}(d) and (e).
		However, if one of these $2\delta$-pairs is connected as Fig.~\ref{fig:switch}(e), then there exists no local red path between the corresponding clause gadget and the spine.
		If we assume that such a path passes through a different switch, it must be part of a minimally connected $2\delta$-pair, since we can only afford any $2\delta$-pair with $4\delta\sqrt{2}$ cost per clause.
		The only option would be to use the edges shown in Fig.~\ref{fig:conter-expamples}(d), which is impossible because then the graph induced by $S_\texttt{r}$ would be disconnected (notice that in Fig.~\ref{fig:conter-expamples}(d) there exists a red component in the bottom that is not connected to the spine). Hence, all $m$ these $2\delta$-pairs must be as shown in Fig.~\ref{fig:switch}(d).
	\end{proof}

	\begin{repeatlemma}{lem:csg2sat}
		Let $f(A)$ be a positive instance of \textsc{Min-3CSG}. Then $A$ is a positive instance of \textsc{Planar-Monotone-3SAT}.
	\end{repeatlemma}
	
	\begin{proof}
		We say that a 2-pair is set to \texttt{true} if its upper switch contains a black edge and is set to \texttt{false} otherwise.
		First we show that every $2$-pair of the same variable gadget is set to the same value.
		For contradiction assume that two consecutive 2-pairs are set to different truth values.
		Fig.~\ref{fig:conter-expamples} shows all possible configurations that satisfy Corollaries~\ref{cor:2-pair} and \ref{cor:clause}.
		As already stated in the proof of Corollary~\ref{cor:clause}, the local configuration in Fig.~\ref{fig:conter-expamples}(d) is excluded.
		The configurations in Figs.~\ref{fig:conter-expamples}(a) and (c) lead to a red induced component on the top and bottom part of the construction, respectively, hence $S_\texttt{r}$ would induce a disconnected graph.
		The configuration in Fig.~\ref{fig:conter-expamples}(b) would also imply that $S_\texttt{r}$ induces a disconnected graph, since there would not be any red induced path between the corresponding clause and the spine.
		
		We conclude that any standard solution is similar to the one described in the proof of Lemma~\ref{lem:sat2csg}. Then, we can easily assign a truth value from $ \{\texttt{true},  \texttt{false}\} $ to every variable, obtaining a solution for the \textsc{Planar-Monotone-3SAT}.
	\end{proof}

%PROOFS FROM DP
\subsection {Proofs from Section~\ref{sec:collinear}}

\begin{repeatlemma}{lem:valid}
	For $i\in \{1,\ldots , n\}$, an edge set $X_i^0$ between $S[1,i]$ and $S[i+1,n]$ is \emph{valid} iff
%	\begin{enumerate}
%		\item[(\textbf{v1})] every edge in $X_i^0$ has property ($\star$); and
%		\item[(\textbf{v2})]
    for every primary color $c\in \{1,\ldots, k\}$, there is an edge $e\in X_i^0$ such that $c\in \alpha(e)$
    whenever both $S[1,i]_c$ and $S[i+1,n]_c$ are nonempty.
%	\end{enumerate}
\end{repeatlemma}
\begin{proof}
Let $G=(S,E)$ be a CSG with property ($\star$). Then for every $c\in\{1,\ldots , k\}$, the graph $(S_c,E_c)$ is connected. If both $S[1,i]_c$ and $S[i+1,n]_c$ are nonempty, there is an edge $e\in (E_0^i)_c$, hence $c\in \alpha(e)$. %Since $e\in E$, edge $e$ has property ($\star$).

Conversely, assume every edge in $X_i^0$ has property ($\star$). Let $E_i^-$ (resp., $E_i^+$) be the set of all edges on $S[1,i]$ (resp., $S[i+1,n]$) with property ($\star$). By Lemma~\ref{lem:unique-color}, $(S[1,i]_c,(E_i^-)_c)$ and $(S[i+1,n])c,(E_i^+)_c)$ have the same components as the compete graph in each primary color $c\in \{1,\ldots , k\}$. Then
%$(S,E_1^-\cup E_i^0\cup E_i^-)$ has property ($\star$). Property  (\textbf{v2}) implies that
$(E_i^0)_c$ contains at least one edge between $S[1,i]_c$ and $S[i+1,n]_c$ if both are nonempty.
Consequently, $(S,E_1^-\cup E_i^0\cup E_i^-)$ is a CSG for $S$, as required.
\end{proof}

\begin{repeatlemma}{lem:halves}
	Let $G=(S,E)$ be a CSG, $i\in \{1,\ldots, n\}$, and $X_i^-$ an edge set on $S[1,i]$.
	The graph $(S,X_i^-\cup E_i^0\cup E_i^+)$ is a CSG iff
	the graph $(S[1,i]_c,(X_i^-)_c\cup E(\pi_i^c))$ is connected for every $c\in \{1\ldots ,k\}$.
\end{repeatlemma}
\begin{proof}
	Let $c\in \{1\ldots ,k\}$. We claim that $(S_c,(X_i^-\cup E_i^0\cup E_i^+)_c)$ is connected iff
	$(S[1,i]_c,(E_i^-)_c\cup E(\pi_i^c))$ is connected.
	If $S[1,i]_c=\emptyset$, then $(S_c,(X_i^-\cup E_i^0\cup E_i^+)_c)=(S[i+1,n]_c,(X_i^-)_c)$.
	If $S[i+1,n]_c=\emptyset$, then $(S_c,(X_i^-\cup E_i^0\cup E_i^+)_c)=(S[1,i]_c,(E_i^+)_c)$.
	In both cases, the claim follows.
	
	Assume that neither $S[1,i]_c$ nor $S[i+1,n]_c$ is empty. Since $G$ is a CSG, then $(S_c,E_c)$ is connected, and so each component of $(S[i+1,n]_c,(E_i^+)_c)$ is incident to some edge in $(E_i^0)_c$ that has an endpoint $S[1,i]_c$. Consequently, each component of $(S_c,(X_i^-\cup E_i^0\cup E_i^+)_c)$ has a vertex in $S[1,i]_c$. Therefore $(S_c,(X_i^-\cup E_i^0\cup E_i^+)_c)$ is connected iff it contains a path between any two vertices of $S[1,i]_c$. Two vertices in $S[1,i]_c$ are connected by a path in $(E_i^0\cup E_i^+)_c$ iff they are adjacent along an edge in $E(\pi_i^c)$. Consequently, there is a path between any two vertices in $S[1,i]_c$ iff they are in the same component of
	$(S[1,i]_c,(E_i^-)_c\cup E(\pi_i^c))$, as required.
\end{proof}

\begin{repeatlemma}{lem:compatible}
	Two valid edge sets $E_{i-1}^0$ and $E_i^0$ are compatible iff
	every edge $e$ in the symmetric difference of $E_{i-1}^0$ and $E_i^0$ is a incident to $p_i$.
\end{repeatlemma}
\begin{proof}
	Assume that $E_{i-1}^0$ is compatible with $E_i^0$, witnessed by some $X_i^-\in \mathcal{X}(E_i^0,\Pi_i)$. Property ($\star$) implies that the left endpoint of an edge $e\in E_i^0$ is $p_i$ iff $\alpha(e)\subset \alpha(p_i)$.
	Similarly, the right endpoint of an edge in $E_{i-1}^0$ is $p_i$ iff $\alpha(e) \alpha(p_i)$.
	
	Conversely, assume every edge $e$ in the symmetric difference of $E_{i-1}^0$ and $E_i^0$ is a incident to $p_i$. Let $E_{i-1}^-$ be the set of all edges on $S[i-1]$ with property ($\star$); and let $X=E_{i-1}^-\cup E_{i-1}^0\cup E_i^0$. Define $\Pi_i$ such that all members of $(E_i^0)_c$ are $\pi_i^c$ equivalent for every $c\in \{1,\ldots, k\}$.  By Lemma~\ref{lem:unique-color}, the graph $(S[1,i-1]_c,(E_{i-1}^-)_c)$ is connected for all primary colors $c\in \{1,\ldots ,k\}$.
	
	For every color $c\in \{1,\ldots ,k\}\setminus \alpha(p_i)$, we have $S[1,i]_c=S[1,i-1]_c$ and so $(S[1,i]_c,(X_i^-)_c\cup E(\pi_i^c))$ is connected.
	Let $c\in \alpha(p_i)$. If $S[1,i]_c=\{p_i\}$, then $(S[1,i]_c,\emptyset)$ is clearly connected. Otherwise, $p_i$ is incident to an edge in $(E_{i-1}^0)_c$ or $(E_i^0)_c$, by Lemma~\ref{lem:valid}. If $p_i$ is incident to an edge in $(E_{i-1}^0)_c$, then it is adjacent a vertex in $S[1,i-1]_c$. Otherwise $p_i$ is incident to some $e_1\in (E_i^0)_c$, consequently $S[i+1,n]_c\neq\emptyset$. By Lemma~\ref{lem:valid}, there is some $e_2\in (E_{i-1}^0)_c\cap (E_i^0)_c$ between $S[1,i-1]_c$ and $S[i+1,n]_c$. By construction, $e_1$ and $e_2$ are $\pi_i^c$-equivalent, and so $(S[1,i]_c,(X_i^-)_c\cup E(\pi_i^c))$ is connected, as required.
\end{proof}

\begin{repeatlemma}{lem:part}
	Let $E_{i-1}^0$ and $E_i^0$ be two valid compatible edge sets, and $\Pi_i=(\pi_i^1,\ldots ,\pi_i^c)$.
	Let $E_{i-1}^-$ be a set of edges on $S[1,i-1]$, and put $E=E_{i-1}^-\cup E_{i-1}^0\cup E_i^0$.
	Then, $\widehat{\Pi}_{i-1}$ has the following property:
	$E_i^-\in \mathcal{X}(E_i^0,\Pi_i)$ if and only if
	\begin{enumerate}
		\item[(\textbf{d1})] $E_{i-1}^-\in \mathcal{X}(E_{i-1}^0,\widehat{\Pi}_{i-1})$; and
		\item[(\textbf{d2})] if $c\in \alpha(p_i)$ and $S[1,i]_c\neq \{p_i\}$, then $p_i$ is incident to
		an edge in $(E_{i-1}^0)_c$ or an edge in $(E_i^0)_c$ that is $\pi_i^c$-equivalent to some edge incident to $S[1,i-1]_c$.
	\end{enumerate}
\end{repeatlemma}
\begin{proof}
	Assume $E_i^-\in \mathcal{X}(E_i^0,\Pi_i)$.
	Then the graph $(S[1,i]_c,(E_i^-)_c\cup E(\pi_i^c))$ is connected for every $c\in \{1,\ldots, k\}$, however, its induced subgraph on $S[1,i-1]_c$ may have two or more components. Between any two such components, there is a path in $(E_i^-\setminus E_{i-1}^-)_c\cup E(\pi_i^c)$, where
	$(E_i^-\setminus E_{i-1}^-)$ is the set of edges whose right endpoint is $p_i$.
	The first and last edge of any such path is $\hat{\pi}_{i-1}^c$-equivalent by the definition of $\pi_{i-1}^c$.
	Consequently, $E_{i-1}^-\in \mathcal{X}(E_{i-1}^0,\widehat{\Pi}_{i-1})$, and (\textbf{d1}) follows.
	For every $c\in \alpha(p_i)$, if $S[1,i-1]_c\neq \{p_{i-1}\}$, then $p_i$ is incident to some edge in $(E_{i-1}^-)_c\cup E(\hat{\pi}_i^c)$ by (\textbf{d1}), which implies (\textbf{d2}).
	
	Conversely, assume (\textbf{d1}) and (\textbf{d2}). Since $E_{i-1}^0\in \mathcal{X}(E_i^0,\widehat{\Pi}_{i-1})$, then the graph $(S[1,i]_c,(E_{i-1}^-)_c\cup E(\hat{\pi}_{i-1}^c))$ is connected for every $c\in \{1,\ldots, k\}$. That is, there is a path in $(E_{i-1}^-)_c\cup E(\hat{\pi}_{i-1}^c)$ between any two vertices in $S[1,i-1]_c$. If we replace the edges in $E(\hat{\pi}_{i-1}^c)$ with a sequence of edges thate in $E(\pi_i^0)$ or incident to $p_i$ (using the definition of $\hat{\pi}_{i-1}^c$), there is a path in $(E_i^-)_c\cup E(\pi_i^c)$ between any two vertices in $S[1,i-1]_c$. Finally, for every $c\in \alpha(p_i)$, we need to show that there is a path in $(E_i^-)_c\cup E(\pi_i^c)$ between $p_i$ and any other vertex of $S[1,i]_c$. This is vacuously true if $p_i$ is the only vertex of $S[1,i]_c$. Suppose that $S[1,i]_c\neq \{p_i\}$. By (\textbf{d2}), $p_i$ is incident to an edge in $(E_{i-1}^0)_c$ or an edge in $E(\pi_i^c)$. Consequently, $E_i^-\in \mathcal{X}(E_i^0,\Pi_i)$, as required.
\end{proof}

\begin{repeatlemma}{lem:colcur}
	For all $i\in \{2,\ldots , n\}$, we have the following recurrence:
	\begin{equation}
		\label{eq:dynamicProg}
		\textbf{A}[i, E_i^0, \Pi_i]=\sum_{\{p_h,p_i\}\in E^0_i}w(\{p_h,p_i\})
		+\min_{E_{i-1}^0\text{compatible}} \textbf{A}[i-1, E_{i-1}^0, \widehat{\Pi}_{i-1}] .
	\end{equation}
\end{repeatlemma}

\begin {proof}
  The lemma follows directly from Lemmata~\ref{lem:compatible} and~\ref {lem:part}.
\end {proof}

\end{document}